\newcommand{\bigo}{\mathcal{O}}
\newcommand{\N}{\mathbb{N}}
\newcommand{\G}{{\mathcal G}}
\newcommand{\Gnp}{\G(n,p)}
\newcommand{\R}{\mathbb{R}}
\newcommand{\E}{\mathbb E}
\newcommand{\Prob}{\mathbb{P}}
\theoremstyle{plain}
\newtheorem{definition}{Definition}[section]
\newtheorem{lemma}[definition]{Lemma}
\newtheorem{theorem}[definition]{Theorem}
\newtheorem{proposition}[definition]{Proposition}
\newtheorem{property}[definition]{Property}
\title{Asynchronous Majority Dynamics on Binomial Random Graphs\thanks{This work was done while the authors were visiting the Simons Institute for the Theory of Computing. This project was funded in part by the European Research Council (ERC) under the European Union's Horizon 2020 research and innovation program (grant agreement no.\ 866132) and by the Israel Science Foundation (grant no.\ 317/17), when D.\ Mohan was at Tel Aviv University. A short version of the paper appeared at APPROX 2024.}
}
\author{
Divyarthi Mohan\thanks{Faculty of Computing \& Data Sciences, Boston University, USA, email: \texttt{dmohan@bu.edu}.}
\and
Paweł Prałat\thanks{Department of Mathematics, Toronto Metropolitan University, Toronto, ON, Canada, email: \texttt{pralat@torontomu.ca}.}
}
\date{}
\begin{document}

\maketitle

\begin{abstract}
We study information aggregation in networks when agents interact to learn a binary state of the world. Initially each agent privately observes an independent signal which is \emph{correct} with probability $\frac{1}{2}+\delta$ for some $\delta > 0$. At each round, a node is selected uniformly at random to update their public opinion to match the majority of their neighbours (breaking ties in favour of their initial private signal). Our main result shows that for sparse and connected binomial random graphs $\Gnp$ the process stabilizes in a \emph{correct} consensus in $\bigo(n\log^2 n/\log\log n)$ steps with high probability. In fact, when $\log n/n \ll p = o(1)$ the process terminates at time $\hat T = (1+o(1))n\log n$, where $\hat T$ is the first time when all nodes have been selected at least once.  However, in dense binomial random graphs with $p=\Omega(1)$, there is an information cascade where the process terminates in the \emph{incorrect} consensus with probability bounded away from zero.
\end{abstract}

\section{Introduction}

Our opinions and actions we take as individuals are often influenced by both our private knowledge of the world and the information we obtain through our interactions with others. For example, a voter deciding which candidate's economics policies would decrease inflation, might have an initial belief based on her own past expenditure and later might be swayed by her friends' opinions. Now more than ever, with the advent of social media and online platforms, our interactions have increased many folds and our social networks are massive. Hence, an important research question is to understand if and how the structure of the social network and the dynamics of the interactions impact the (mis)information propagated~\cite{matwin2023generative}. Do our social networks enable successful information aggregation and lead to social learning, or do they amplify incorrect beliefs leading to an information cascade? 

There has been extensive work modeling these opinion dynamics formally to study the network effects on information aggregation; see Section~\ref{sec:related}. In this paper, we focus on the model of \emph{asynchronous majority dynamics}, where agents in a network (asynchronously) update their opinions to match the majority opinion amongst their neighbours. In particular, each agent initially has a private belief over a binary state of the world and no publicly announced opinion. At each time step, an agent is chosen uniformly at random to announce/update her opinion and she does so by simply copying the majority of the neighbours' current announced opinions, breaking ties with her initial belief. Majority dynamics is clearly a na\"ive learning model, as the agents do not reason about potential information redundancy due to interaction between one's neighbours. Such na\"ive learning (non-Bayesian) models are a more faithful abstraction of everyday interactions between agents with bounded rationality (e.g., voters and consumers), while Bayesian models are a better abstraction of rational agents or interactions about high-stakes information (e.g., traders and scientists). We consider asynchronous updates which are more suitable to capture human decision making. Moreover, asynchronous emergence of announcements also captures an initial information diffusion phase before conventional social learning starts.

In our model, there is a \emph{correct} opinion (i.e., the true state of the world) and each agent's initial private belief is independently drawn and is biased towards being correct (with probability $1/2 + \delta$). So initially, in a large network, there is enough information so that an omniscient central planner can infer the true state (with very high probability). However, agents in the network are updating their opinions based on local heuristics, so the network structure can crucially alter the final outcome of the dynamics. For example, in a complete graph, with a constant probability all the nodes converge to the wrong opinion. On the other hand, in a star graph with high probability all the nodes converge to the correct opinion.  This brings us to the main question of interest:
\begin{quote}
    \emph{``What network structures enable efficient social learning, where the dynamics stabilizes with every agent in the network reaching the correct opinion?''}
\end{quote}

Feldman et al.~\cite{FeldmanILW14}, who initiated the study of asynchronous majority dynamics, showed that when the network is sparse (has bounded degree) and expansive, a correct consensus is reached with high probability. More recently, Bahrani et al.~\cite{BahraniIMW20} studied networks that have certain tree structures (like preferential attachment trees and balanced $m$-ary trees) and showed that the dynamics stabilizes in a correct majority. Both results heavily rely on these particular assumptions on the network. For example, to even establish that a \emph{majority of the nodes} have the correct opinion \emph{at some point} in the process, it is crucial that the network is either a bounded degree graph or is a tree. In this paper, our goal is to extend the guarantees of asynchronous majority dynamics beyond these assumptions and to develop techniques applicable to more general networks formed through random graph models.  

\subsection{The Model}\label{sec:model}

Consider any undirected graph $G=(V,E)$ on $n=|V|$ nodes. Individuals initially have one of two private \emph{beliefs} which we will refer to as ``Correct'' (or 1) and ``Incorrect'' (or 0). Formally, each $v \in V(G)$ receives an independent private signal $X(v) \in \{0, 1\}$, and $\Pr( X(v) = 1 ) = 1/2 + \delta$, for some universal constant $\delta \in (0,1/2)$. Individuals also have a publicly announced \emph{opinion} which we will simply refer to as an announcement or opinion. We define $C^t(v) \in \{\perp, 0, 1\}$ to be the public announcement of $v \in V$ at time $t$. 

Initially, no announcement have been made, that is, $C^0(v) = \perp$ for all $v \in V$. In each subsequent step, a single node $v^t$ is chosen uniformly at random from $V$, independently from the history of the process. In particular, as in the classical coupon collector problem, some nodes will be chosen many times before others will get lucky to get chosen for the first time. In step $t$, $v^t$ updates her announcement using \emph{majority dynamics}, while announcements of other nodes stay the same. To be specific, for any $i \in \{\perp, 0, 1\}$ and $v \in V$, let $N^t_i(v)$ denotes the number of neighbours of $v$ that have opinion $i$ at time $t$. Then,
$$
C^t(v) = 
\begin{cases}
1 & \text{ if } N_1^{t-1}(v) > N_0^{t-1}(v) \text{ and } v=v^t,\\
0 & \text{ if } N_1^{t-1}(v) < N_0^{t-1}(v) \text{ and } v=v^t,\\
X(v) & \text{ if } N_1^{t-1}(v) = N_0^{t-1}(v) \text{ and } v=v^t,\\
C^{t-1}(v) & \text{ if } v \neq v^t.
\end{cases}
$$
That is, at time step $t$, the opinion of the chosen node $v^t$ is updated to match the majority opinion among the current public announcements of $v^t$'s neighbours and ties are broken in favour of $v^t$'s initial private belief. Observe that when none of $v^t$'s neighbours have announced so far, i.e., $N_1^{t-1}(v)=N_0^{t-1}(v)=0$, we have $C^t(v)=X(v)$.

Finally, for any $i \in \{\perp, 0, 1\}$, let $Y_i^t$ be the number of nodes that have opinion $i$ at time $t$, that is, $Y_i^t = | \{ v \in V : C^t(v) = i \} |$.

As shown in~\cite{FeldmanILW14}, it is easy to see that in any network this process stabilizes with high probability in $\bigo(n^2)$ steps. In fact, the process stabilizes in $O(n\log n + n\cdot d(G))$ where $d(G)$ is the diameter of the graph~\cite{BahraniIMW20}. That is, the network reaches a state at some time $T$ where no node will want to change its announcement and thus the process terminates. Our goal is to understand what fraction of nodes converges to the correct opinion, that is, what the value of $Y_1^T / n$ is.

\subsection{Our Results}\label{sec:results}

The main contribution of this paper is the proof that the asynchronous majority dynamics on \emph{binomial random graph} $\G(n,p)$ converges to the correct opinion, provided that the graph is sparse (that is, the average degree $np = o(n)$) and connected (that is, $np - \log n \gg 1$). If $np \gg \log n$, then the process converges to the correct opinion as quickly as it potentially could. 

\begin{theorem}\label{thm:not_very_sparse_graphs}
Let $\delta \in (0, 1/10]$. Let $\omega'=\omega'(n)=o(\log n)$ be any function that tends to infinity as $n \to \infty$. Suppose that $p = p(n) \ll 1$ and $p \gg \log n/n$, and consider the asynchronous majority dynamics on $\G(n,p)$. 

Then, asymptotically almost surely (a.a.s.) after $n(\log n + \omega') = (1+o(1)) n \log n$ rounds the process terminates with all nodes announcing the correct opinion. In fact, it happens exactly at time $\hat{T}$, where $\hat{T}$ is the first time when all nodes are selected at least once.
\end{theorem}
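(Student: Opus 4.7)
The strategy combines structural properties of $\Gnp$, standard coupon-collector estimates for $\hat T$, and a two-phase analysis of the opinion dynamics. We condition throughout on $G\sim\Gnp$ satisfying the following a.a.s.\ properties (both standard Chernoff bounds since $np \gg \log n$): every vertex has degree $(1+o(1))np$; and uniformly over every vertex $v$ and every set $S\subseteq V$ with $|S|\geq \log n/p$, $|N(v)\cap S|=(1+o(1))|S|p$. The upper bound $\hat T\leq n(\log n+\omega')$ needed in the theorem is immediate: $\Prob(\hat T > n(\log n+\omega')) \leq n(1-1/n)^{n(\log n+\omega')} \leq e^{-\omega'}\to 0$.

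Fix an \emph{early-phase threshold} $T_0 := C\log n/(\delta^2 p)$ for a large constant $C=C(\delta)$; since $np\gg\log n$, we have $T_0 = o(n)$. In the early phase $[0,T_0]$ we show that a.a.s.\ the fraction of $1$-opinions among announced vertices is at least $1/2+\delta/2$ and $Y_0^{T_0}=O(T_0)$. The reason is that most vertices $v^t$ selected in this phase either have no announced neighbour --- in which case the opinion becomes the private signal $X(v^t)$, biased toward $1$ --- or have only few announced neighbours, for which the induced majority is itself driven by the underlying private-signal bias. A drift/martingale analysis of $Y_1^t-Y_0^t$, coupled to a cleaner ``private-signals-only'' baseline process, formalises this.

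In the \emph{late phase} $(T_0,\hat T]$, self-reinforcement takes over. For any $t>T_0$ the number of announced vertices is $\gg \log n/p$, so by neighborhood concentration, any selected vertex $v=v^t$ has $N_1^{t-1}(v) \geq (1-o(1))Y_1^{t-1}p$ and $N_0^{t-1}(v) \leq (1+o(1))Y_0^{t-1}p$ (with the trivial bound $N_0^{t-1}(v)\leq Y_0^{t-1}$ when the $0$-set is smaller than $\log n/p$). Together with the inductive invariant $Y_1^{t-1} > Y_0^{t-1}$ --- preserved because the gap only widens as new or re-selected vertices adopt $1$ --- this forces $N_1^{t-1}(v)>N_0^{t-1}(v)$ at every selection in the late phase, so $v$ adopts opinion $1$. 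Consequently $Y_0^t$ is monotonically non-increasing after $T_0$, and each $0$-vertex at time $T_0$ reverts to $1$ the next time it is selected. The probability that a fixed such vertex is not re-selected by time $n(\log n+\omega')$ is $(1-1/n)^{n(\log n+\omega')-T_0} \leq e^{-\omega'}/n$; summing over the $O(T_0)$ vertices in the $0$-set yields $O(T_0/n)=o(1)$ expected uncorrected vertices, so a.a.s.\ none remain. The last-selected vertex, picked for the first and only time at $\hat T$, then sees all $n-1$ other vertices announced as $1$'s, has $\gg\log n$ such announced neighbours, and also adopts $1$. Thus $C^{\hat T}(v)=1$ for every $v$, while before $\hat T$ some vertex is still unannounced, so termination happens exactly at time $\hat T$ in a correct consensus.

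\textbf{Main obstacle.} The hardest step is the early-phase invariant: during $[0,T_0]$ the dynamics mixes tie-broken private signals with noisy small-neighborhood majorities, whose aggregate behaviour is harder to control than in the late-phase concentration regime. The natural tool is a coupling with the private-signals-only baseline and a martingale/Azuma analysis of $Y_1^t-Y_0^t$, showing this quantity drifts at rate $\Omega(\delta)$ per step. A secondary subtlety is that the neighborhood-concentration property must hold uniformly across all the $\hat T$-many random announced sets $\{v:C^t(v)\neq\perp\}$; conditioning on an (a.a.s.) uniform-concentration property of $\Gnp$ that controls $|N(v)\cap S|$ simultaneously for all $S$ of the relevant size handles this cleanly.
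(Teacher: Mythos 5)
There is a genuine gap, and it sits exactly at the point you flag as a "secondary subtlety": the uniform-concentration property you condition on is false. No graph satisfies, even a.a.s., the statement that $|N(v)\cap S|=(1+o(1))|S|p$ simultaneously for every vertex $v$ and \emph{every} set $S$ with $|S|\ge \log n/p$: taking $S$ to be a subset of $N(v)$ of size $\lceil \log n/p\rceil$ (when $np^2\ge\log n$), or a set of size $\lceil \log n/p\rceil$ containing $N(v)$ (when $np^2<\log n$, using $np\gg\log n$), gives $|N(v)\cap S|$ larger than $|S|p$ by an unbounded factor. A union bound over all such $S$ cannot work because there are $\binom{n}{\log n/p}$ of them while each individual deviation probability is only $n^{-\Theta(1)}$. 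This matters because the sets $\{v:C^{t-1}(v)=1\}$ and $\{v:C^{t-1}(v)=0\}$ are not selection-determined: they depend on the graph, and for a vertex that has already been selected its incident edges have been examined by the dynamics and are correlated with its neighbours' opinions, so you cannot apply a fresh Chernoff bound to $N_1^{t-1}(v)$, nor can you replace it by a deterministic graph property of the above form. The paper's proof is organised precisely to avoid this: edges are exposed only when a vertex is selected for the first time (Property~\ref{prop:exposingGnp}), the Chernoff estimates in the third phase are applied only to vertices outside $V_0\cup V_1$ (whose edges to the announced set are still unexposed), and the final phase is run against the sets $W_0,W_1$, which are determined solely by the selection process and hence independent of the graph, so that concentration can be applied to \emph{all} vertices, including re-selected ones. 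Your late-phase induction ("any selected vertex has $N_1^{t-1}(v)\ge(1-o(1))Y_1^{t-1}p$") has no valid justification for re-selected vertices without some device of this kind, and that device is the actual content of the paper's argument after Proposition~\ref{prop:end_of_phase2}.

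Two further points. First, your early-phase claim that the drift of $Y_1^t-Y_0^t$ is $\Omega(\delta)$ per step is asserted rather than proved; the delicate case is a selected vertex with exactly one announced neighbour, where copying gives no gain over the current ratio, and the paper needs both the bound $p_1\le 1/e+o(1)$ on the probability of this event and the technical Lemma~\ref{lem:binomial} (this is where $\delta\le 1/10$ enters), so this step is not routine. Second, your argument for termination \emph{exactly} at $\hat T$ only shows that every $0$-holder is re-selected by time $n(\log n+\omega')$, which may exceed $\hat T$; to conclude exactness you must show the corrections happen strictly before $\hat T$, which the paper does by re-selecting all previously selected vertices within $n(\log n-\tfrac14\log\omega)$ rounds of the last phase while $\hat T$ a.a.s.\ exceeds that (via the coupon-collector lower bound $\Prob(\hat T<n\log n-cn)<e^{-c}$). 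That comparison is missing from your sketch.
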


For sparser (but still connected) graphs, the process also converges to the correct opinion. In this case, we do not aim to show that it happens at time $\hat{T}$ and we only provide an upper bound for the number of rounds. It remains an open problem to determine if the process terminates at time $\hat{T}$ or it needs more time to converge.

\begin{theorem}\label{thm:very_sparse_graphs}
Let $\delta \in (0, 1/10]$. Let $\omega'=\omega'(n)=o(\log \log n)$ be any function that tends to infinity as $n \to \infty$. Suppose that $p = p(n) \le \omega' \log n / n$ and $p \ge (\log n+\omega')/n$, and consider the asynchronous majority dynamics on $\G(n,p)$. 

Then, a.a.s.\ after $\bigo (n (\log n)^2 / (\log \log n))$ rounds the process terminates with all nodes announcing the correct opinion. 
\end{theorem}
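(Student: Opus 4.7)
My plan is to separate the vertices of $\Gnp$ into a bulk $V\setminus L$ of high-degree vertices, where I adapt the analysis behind Theorem~\ref{thm:not_very_sparse_graphs} to achieve correct consensus in $\bigo(n\log n)$ steps, and a small set $L$ of low-degree vertices, which the extra $\Theta(n\log^2 n/\log\log n)$ budget allows me to correct afterward. Fix $K := (\log n)/\log\log n$ and set $L := \{v \in V : \deg_G(v) \le K\}$. In the regime $(\log n + \omega')/n \le p \le \omega'\log n/n$, a first-moment estimate yields $\Prob(\mathrm{Bin}(n-1,p) \le K) \le n^{-1+o(1)}$, so $|L| \le n^{o(1)}$ a.a.s.\ (after a standard concentration step). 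Further first-moment bounds over pairs show that a.a.s.\ $L$ is an independent set and no two vertices of $L$ share a common neighbor; equivalently, every $u \in V \setminus L$ has at most one neighbor in $L$. The graph $\Gnp$ is connected a.a.s.\ by the hypothesis $p \ge (\log n + \omega')/n$.

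\textbf{Dynamics on the bulk.} Set $T_1 = (1+o(1))\, n\log n$. Adapting the argument behind Theorem~\ref{thm:not_very_sparse_graphs}, I would show that a.a.s.\ every $u \in V \setminus L$ has announced the correct opinion by time $T_1$ and keeps it thereafter. The adaptation is plausible because each bulk vertex has degree $\ge K \to \infty$ and at most one of its neighbors lies in $L$: a potentially incorrect low-degree neighbor can shift the majority tally by only $\pm 1$, which is insignificant compared to the $\Omega(\delta K)$ margin produced by a bulk-only analysis. Once a bulk vertex and all of its bulk neighbors carry the correct opinion, no subsequent selection can flip it.

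\textbf{Correcting $L$ and the main obstacle.} Let $T_2 = c\, n\log^2 n/\log\log n$ for a sufficiently large constant $c$. A Chernoff plus union bound argument shows that, in the window $[T_1, T_1+T_2]$, each vertex is selected $\Omega(\log^2 n/\log\log n)$ times a.a.s., so every $v \in L$ is selected at least once. When such a $v$ is selected at any $t \ge T_1$, all of its neighbors lie in $V\setminus L$ and carry the correct opinion, so $v$ updates correctly and cannot flip later; the configuration is then a stable correct consensus, yielding the desired $\bigo(n\log^2 n/\log\log n)$ bound. I expect the main obstacle to be the bulk step above: a per-vertex Chernoff on a degree-$K$ neighborhood gives only an $\exp(-\Omega(\delta^2 K))$ failure probability, which is too weak to survive a union bound over $n$ vertices when $K = (\log n)/\log\log n$. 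A more global argument, tracking the evolution of the incorrect-opinion set through the dynamics and exploiting the amplification of the initial bias $\delta$ rather than controlling each bulk vertex in isolation, appears necessary.
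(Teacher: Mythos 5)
Your decomposition into small-degree and large-degree vertices, and the observation that small-degree vertices are rare, pairwise non-adjacent and share no common neighbours, matches the paper's Lemma~\ref{lem:small_degree} (with a comparable threshold), and your endgame for $L$ is essentially the paper's final step. But the heart of the theorem is the step you explicitly leave open: establishing that the bulk $V\setminus L$ reaches (and retains) the correct opinion. You correctly diagnose why the direct adaptation of the $pn\gg\log n$ argument fails --- with $T_2p=\Theta(\log n/\omega)$ or neighbourhoods of size $\Theta(\log n/\log\log n)$, per-vertex Chernoff bounds give failure probabilities of the form $\exp(-\Theta(\delta^2\log n/\log\log n))$, far too weak for a union bound over $n$ vertices --- but you then only gesture at ``a more global argument'' without supplying one. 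That global argument is precisely the technical content of the paper's proof: Proposition~\ref{prop:end_of_phase3} takes a union bound not over vertices but over all candidate sets $S$ of the first $s=n\omega/\log n$ nodes to announce~$0$ after time $T_2$ (together with $V_0,V_1$ and a partition of $S$ into three deviation types), exploiting that each such node must have anomalously few neighbours in $V_1\setminus S$, anomalously many in $V_0$, or many inside $S$ itself, the last case being handled by an edge-count bound on $S$; Proposition~\ref{prop:end_of_phases} then iterates a similar set-level union bound to show the set of large-degree nodes holding opinion~$0$ shrinks by a factor $(\log\log n)^{1/4}$ in each batch of $2n\log n$ rounds. Without these (or a substitute of comparable strength), your proposal does not prove the theorem.

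Two further points. First, your bulk claim is stronger than what the paper proves: you assert correct consensus on $V\setminus L$ by time $T_1=(1+o(1))n\log n$, whereas the paper only gets at most $n\omega/\log n$ wrong announcers by time $2n\log n$ and then needs $\Theta(\log n/\log\log n)$ additional batches to clear the large-degree nodes; indeed, whether the process terminates already at the coupon-collector time $\hat T$ in this sparse regime is stated as an open problem. Second, your time budget is allocated backwards: the $\Theta(n(\log n)^2/\log\log n)$ bound in the theorem is consumed by the iterated cleaning of the \emph{large-degree} nodes, while fixing the small-degree set $L$ at the end costs only an extra $\bigo(n\log n)$ rounds (every vertex is selected in such a window a.a.s.), so reserving the large budget for $L$ does not address where the time is actually needed. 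The stability assertion ``keeps it thereafter'' for bulk vertices is likewise not automatic under asynchronous reselection and is exactly what the sequential ``first $s$ (resp.\ $u$) nodes to announce~$0$'' bookkeeping in Propositions~\ref{prop:end_of_phase3} and~\ref{prop:end_of_phases} is designed to control.
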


These results are best possible in the following sense. If $p \le (\log n-\omega')/n$, then a.a.s.\ $\G(n,p)$ is disconnected. In fact, a.a.s.\ there are at least $\omega'$ isolated nodes which announce their own private believes. As a result, a.a.s.\ some nodes announce the correct opinion but some of them announce the incorrect one. Indeed, the probability that all isolated nodes converge to the same opinion is at most $o(1) + (1/2+\delta/2)^{\omega'} + (1/2-\delta/2)^{\omega'} = o(1)$. On the other hand, if $p \in (0,1]$ is a constant separated from zero, then with positive probability the process converges to the correct opinion and with positive probability it converges to the incorrect opinion. 

\begin{theorem}\label{thm:dense_graphs}
Let $\delta \in (0, 1/2)$. Let $\omega'=\omega'(n)=o(\log n)$ be any function that tends to infinity as $n \to \infty$. Suppose that $p \in (0,1]$ is a constant, and consider the asynchronous majority dynamics on $\G(n,p)$. 

Then, the following is true for $i \in \{0,1\}$: with probability at least $p_i$, after $n(\log n + \omega') = (1+o(1)) n \log n$ rounds the process terminates with all nodes announcing opinion~$i$, where
\begin{eqnarray*}
p_1 &=& (1/2+\delta) \exp \Big( - \log (1/p) (1/p) \Big) =(1/2+\delta)p^{1/p} > 0\\
p_0 &=& (1/2-\delta) \exp \Big( - \log (1/p) (1/p) \Big) =(1/2-\delta)p^{1/p} > 0.
\end{eqnarray*}
\end{theorem}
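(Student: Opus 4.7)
\textbf{Proof proposal for Theorem~\ref{thm:dense_graphs}.} My plan is to exhibit, for each target opinion $i\in\{0,1\}$, a ``good event'' $E_n$ whose probability admits an exact product formula and which forces the process to stabilize at all-$i$; the bulk of the work then reduces to an analytic estimate showing that this product is at least $p_i$.

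\textbf{Step 1 (Order of first selections).} I will condition on the order $(u_1,\ldots,u_n)$ in which the distinct nodes are first selected, which is a uniform random permutation independent of $\G(n,p)$ and of the private signals. Let $E_k$ be the event that each of $u_1,\ldots,u_k$ announces opinion $i$ at its first selection. By a simple induction on time, on $E_n$ every announcement made throughout the process equals $i$: once a node has announced $i$, any re-selection still sees only $i$-announcements among its announced neighbours and re-announces $i$. Hence as soon as the coupon-collector time $\tau_n$ is reached, the process has terminated at all-$i$. A standard bound gives $\Pr[\tau_n > n(\log n+\omega')] \le e^{-\omega'} = o(1)$, so the whole theorem reduces to showing $\Pr[E_n]\ge p_i$.

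\textbf{Step 2 (Product formula).} I will derive
\[
\Pr[E_n]=\prod_{k=1}^n\bigl(1-(1-p)^{k-1}q_i\bigr), \qquad q_1:=\tfrac{1}{2}-\delta, \quad q_0:=\tfrac{1}{2}+\delta.
\]
The point is that, conditionally on the permutation, $E_{k-1}$ depends only on $X(u_1),\ldots,X(u_{k-1})$ and on the edges inside $\{u_1,\ldots,u_{k-1}\}$, whereas the event ``$u_k$ announces $i$ at time $\tau_k$'' depends only on $X(u_k)$ and on the edges from $u_k$ to $\{u_1,\ldots,u_{k-1}\}$, which are independent of $E_{k-1}$. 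Given $E_{k-1}$, the only obstruction to $u_k$ announcing $i$ is that it has no neighbour in $\{u_1,\ldots,u_{k-1}\}$ (probability $(1-p)^{k-1}$) \emph{and} $X(u_k)\ne i$ (probability $q_i$), which yields the one-step factor and hence the product by telescoping.

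\textbf{Step 3 (Analytic estimate -- the main obstacle).} The crux of the argument, and the only place where the precise form $p^{1/p}$ is generated, is the inequality $\Pr[E_n]\ge p_i$. I plan to combine two elementary facts. First, for $a,x\in[0,1]$ one has $1-ax\ge(1-x)^a$ (the difference vanishes at $x=0$ and has nonnegative derivative since $(1-x)^{a-1}\ge 1$); applying this with $a=q_i$ and $x=(1-p)^{k-1}$ for $k\ge 2$, and pulling out the $k=1$ factor $1-q_i$, gives
\[
\Pr[E_n]\ \ge\ (1-q_i)\,\Bigl(\prod_{j=1}^{n-1}\bigl(1-(1-p)^j\bigr)\Bigr)^{\!q_i}.
\]
Second, I will bound the $q$-Pochhammer-type product by $\prod_{j\ge 1}\bigl(1-(1-p)^j\bigr)\ge p^{1/p}$. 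Setting $q:=1-p$, concavity of $\log(1-\cdot)$ on $[0,q]$ (chord from $0$ to $q$) gives $\log(1-q^j)\ge q^{j-1}\log(1-q)$ for every $j\ge 1$; summing the geometric series yields $\sum_{j\ge 1}\log(1-q^j)\ge \log(1-q)/(1-q)=(\log p)/p$, which exponentiates to the required bound. Combining the two, and using $q_i\le 1$ and $p\le 1$ to get $p^{q_i/p}\ge p^{1/p}$, produces $\Pr[E_n]\ge (1-q_i)\,p^{1/p}=p_i$. Applying the argument separately with $i=1$ (giving $1-q_1=\tfrac12+\delta$) and $i=0$ (giving $1-q_0=\tfrac12-\delta$) yields both bounds in the theorem, and the $o(1)$ slack from the coupon collector is absorbed by the strict inequality $p^{q_i/p}>p^{1/p}$ (since $q_i<1$).
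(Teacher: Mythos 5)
Your proposal is correct and follows essentially the same route as the paper: both exhibit the cascade event in which every node copies the opinion of the first announcer, telescope over the order of first selections to obtain a product of the form $\prod_k\bigl(1-(1-p)^{k-1}\cdot(\cdots)\bigr)$, and bound that product below by $p^{1/p}$ using the same log-linear interpolation of $\log(1-x)$ on $[0,1-p]$ (your chord inequality $\log(1-q^j)\ge q^{j-1}\log(1-q)$ is exactly the paper's $1-x\ge\exp\bigl(-\tfrac{\log(1/p)}{1-p}x\bigr)$ in logarithmic form). Your version is marginally sharper in that your good event $E_n$ also lets a node with no announced neighbours be rescued by its private signal---which is what necessitates your extra step $1-ax\ge(1-x)^a$ before reducing to the same product---and you are more explicit than the paper about the coupon-collector timing and why its $o(1)$ failure probability is absorbed by the strict slack $p^{q_i/p}>p^{1/p}$.
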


Let us stress the fact that the constants $p_0$ and $p_1$ in Theorem~\ref{thm:dense_graphs} depend on $\delta$ and $p$. Both dependencies are necessary. In particular, Theorem~\ref{thm:not_very_sparse_graphs} implies that $p_0 \to 0$ as $p \to 0$ so there is no universal constant that works for \emph{all} values of $p \in (0,1]$. Finally, let us mention that for some technical reason, in Theorems~\ref{thm:not_very_sparse_graphs} and~\ref{thm:very_sparse_graphs} it is assumed that $\delta \le 1/10$. However, it is easy to couple the process with $\delta \le 1/10$ with the one with $\delta \in (1/10, 1/2)$ to show that the result holds for any $\delta \in (0, 1/2)$---see Subsection~\ref{sec:coupling} for more details.

\paragraph{Sparsity and Social Learning.}  In summary, we show that random graphs with sub-linear average degree converges to the correct consensus a.a.s. On the other hand, random graphs with linear average degree may converge to an incorrect consensus with probability bounded away from zero. These results suggest that network sparsity is fundamental for successful social learning: communities with \emph{sparse} connections are very likely to \emph{collectively learn} the correct opinion, while \emph{dense} connections may enable \emph{information cascades} where the community converging to the incorrect opinion while ignoring the correct private beliefs held by a majority.

Moreover, our analysis highlights a key phenomena behind this difference. Sparsely connected networks allow various private beliefs to emerge independently in the network initially, which enables the correct opinion to win eventually. In contrast, densely connected networks promote rapid copying of early opinions, which overwhelms the natural advantage of the majority correct private beliefs. Further, we find that the level of sparsity can determine how quickly the correct consensus is reached. In particular, with sufficient connections, the correct opinion propagates to nearly the entire network faster than the coupon collector bound (on time $\hat T$ by which all nodes have announced at least once). Consequently, when the network is sparse but not too sparse (i.e., with average degree $\omega(\log n)$ but still sub-linear), the process converges to consensus precisely when the last node makes its first announcement.

These findings strengthens the results of~\cite{FeldmanILW14}, who show that a more restrictive notion of sparsity---{constant maximum degree}---facilitates social learning in expander graphs. While our results focus on a theoretical model of asynchronous majority dynamics, it complements experimental results of Dasaratha and He~\cite{DasarathaH21}, observe that people learn better in sparser binomial random graphs.

\subsection{Future Directions}

Let us highlight a few potential directions one might want to consider.

\begin{itemize}
\item As already mentioned above, for very sparse graphs ($np - \log n \to \infty$ and $np = \bigo(\log n)$), it would be interesting to determine if the process terminates at time $\hat{T}$ or it needs more time to converge to the correct opinion---see Theorem~\ref{thm:very_sparse_graphs}.
\item Theorem~\ref{thm:very_sparse_graphs} holds as long as $pn = \log n + \omega$ for some $\omega = \omega(n) \to \infty$ as $n \to \infty$. It is known that if $pn = \log n + c$ for some constant $c \in \R$, then with probability bounded away from one and from zero the graphs is disconnected. As a result, there is no hope to extend the result for this range of $p$. But it is plausible that a.a.s.\ it holds right at the time the random graph process creates a connected graph. This would be an optimal ``hitting time'' result. 
\item For disconnected graphs ($np - \log n \to -\infty$), it would be interesting to investigate the process run on the giant component of $\G(n,p)$.
\item For dense graphs, it is not true that a.a.s.\ all nodes converge to the correct opinion---see Theorem~\ref{thm:dense_graphs}. Having said that, it is reasonable to expect that a.a.s.\ all nodes converge to the same opinion (for example, \cite{FountoulakisKM20} show that a consensus is reached in this case in a synchronous setting). Is is true in our asynchronous setting? In any case, what is the asymptotic value of the probability that all nodes converge to the correct opinion?
\item It would be interesting to investigate other random graph models that are able to generate graphs with power-law degree distributions as the Chung-Lu model~\cite{chung2006complex} or the classical configuration model. More challenging, but an important and interesting, direction would be to understand the learning process on a network with a community structure such as the ABCD (Artificial Benchmark for Community Detection) model~\cite{kaminski2021artificial} which produces a random graph with community structure and power-law distribution for both degrees and community sizes. In this model, small communities might create echo chambers, environments in which participants encounter beliefs that amplify or reinforce their preexisting beliefs inside a community and insulated from rebuttal.
\item In this paper, we analyzed the model introduced earlier~\cite{FeldmanILW14} that already generated some interests. However, it is reasonable to assume that agents (in lab experiments) are likely to keep their private signals even when the difference between public opinions among their neighbours is small. Hence, it would be interesting to analyze the model in which agents announce their private beliefs in such situation.
\end{itemize}

\subsection{Related Work}\label{sec:related}

In this section, we briefly discuss prior work on social learning mainly focusing on the setting with a binary state of the world and the agents initially have a correct opinion independently with probability $1/2+\delta$. We refer to some recent surveys on social learning and opinion dynamics~\cite{MobiusR14,BecchettiCN20,BikhchandaniHTW21} for a more detailed literature review. 

Majority dynamics falls under a wide class of naive or non-Bayesian models, where agents use a simple local heuristic to update their opinions, to capture simple behaviours exhibited by non-expert decision makers. Prior works have studied majority dynamics under a variety of modeling assumptions, to understand when a consensus is possible and when there is social learning---that is, the consensus (or the majority) is correct. These works study a variety of networks such as $k$-regular trees~\cite{Howard00,KanoriaM11a,BahraniIMW20}, bounded degree graphs~\cite{FeldmanILW14}, random regular graphs~\cite{GartnerZ2018majority}, ``symmetric'' graphs and expanders~\cite{MosselNT13}. In~\cite{TamuzT13}, a different perspective on social learning asks when is it possible to ``recover the correct opinion'' at the end of the dynamics through any function (not just a consensus or majority vote). Prior work has also considered models with different notions of bias towards correct opinion, for example, each node updates to the correct opinion with some probability~\cite{AnagnostopoulosBCPR22}, or the initial configuration of the network has some $n/2 + \delta$ correct opinions~\cite{TranV19,tran2023power}. 

Recently, there has been a series of work studying synchronous majority dynamics in binomial random graphs~\cite{BenjaminiCOTT16,FountoulakisKM20,ChakrabortiHLT23}, with a focus to showing that $99\%$ of the nodes converge to the same opinion (with high probability) for sparse random graphs, with $p = \Omega(\log n/n^{3/5})$ being the best known {lower bound for the average degree}. Moreover,~\cite{zehmakan2020opinion} showed that a correct consensus is reached with high probability for binomial random graphs with $p=\Omega(\log n / n)$ under synchronous majority dynamics. In contrast to these works, we focus on asynchronous dynamics and prove that a correct consensus is reached with high probability for {$p = \Omega(\log n / n)$ and $p=o(1)$}. Binomial random graphs are also studied under label propagation~\cite{kiwi2023label} which is a special case of synchronous majority dynamics with non-binary opinion in $[0,1]$. 

Many of the works mentioned above focus on synchronous updates, where all agents update their opinions synchronously in each round. Majority dynamics with synchronous updates leads to a correct consensus for all networks that are sufficiently connected~\cite{MosselNT13}, whereas with asynchronous updates the network structure can have a huge impact on social learning. This is best illustrated by the complete graph. With asynchronous updates, once the first agent announces their opinion (which can be wrong with probability $1/2 - \delta$) everyone will copy this. Hence, with {a probability bounded away from zero} all the nodes converge to the wrong opinion. In contrast, if all agents were to update synchronously, then the majority of the round one updates will be correct  with high probability, so there will be a correct consensus in round two. Recent work~\cite{BanerjeeBCM21}, studies the DeGroot model with uninformed agents, to capture the different phases of information diffusion and social learning, which is a key phenomena that occurs in our asynchronous model. 

Asynchronous dynamics, where a random node is chosen to update at each time, have also been studied under different modeling assumptions. In~\cite{SchoenebeckY18}, there are no private beliefs, instead initially all nodes have a some publicly announced opinions and ties are broken at random. They show that for any initial configuration a consensus is reached with high probability in time $O(n\log n)$ in dense binomial graphs ($p=\Omega(1)$) under a general class of majority-like update rules.  They leave it as an open problem to study the consensus time of sparse binomial random graph.

Other non-Bayesian dynamics have also been extensively studied. In the Voter model, agents choose a random neighbour and copy their opinion~\cite{Cliffords73,HolleyL75}. A similar dynamics called $k$-majority model are studied in the distributed computing literature, where agents choose $k$-neighbours at random and copy their majority~\cite{BechettiCNPL16,GhaffariP16,GhaffariL18,CrucianiMQR21, AbdullahD2015global}. In the DeGroot Model, an agent's opinion lies in $[0,1]$ (as opposed to binary $\{0,1\}$) and agents update to the average of their neighbours~\cite{DeGroot74,GolubJ10}. {In~\cite{ElboimPP24}, an asynchronous DeGroot dynamics is considered where each node has an independent Poisson clock which determines when they are chosen to update.} A key difference between these works and majority dynamics is that in these models a consensus is reached with probability $1$ for any connected graphs. This is not the case in majority dynamics even with synchronous updates. 

While our focus is in non-Bayesian dynamics, there has also been a long line of work studying Bayesian models, where agents update their beliefs rationally given their (local) observations exhibiting more sophisticated decision-making. Seminal works~\cite{Banerjee92,BikchandaniHW92} introduced the study of Bayesian dynamics and identified conditions that lead to information cascades. Here, the agents arrive sequentially and observe all the announcements (i.e., they form a complete graph), and many other subsequent works consider Bayesian dynamics under different assumptions and variations~\cite{SmithS00,BanerjeeF04,CelenK04}. Bayesian dynamics in general social networks were first studied in~\cite{AcemogluDLO11}. There is also a long line of work studying Bayesian learning with repeated interactions~\cite{GaleK03,RosenbergSV09,KanoriaT13,MullerF13,MosselST14,MosselOT16}. 

\section{Preliminaries}

\subsection{Notation}
Let us first precisely define the $\Gnp$ binomial random graph. $\Gnp$ is a distribution over the class of graphs with the set of nodes $[n]:=\{1,\ldots,n\}$ in which every pair $\{i,j\} \in \binom{[n]}{2}$ appears independently as an edge in $G$ with probability~$p$. Note that $p=p(n)$ may (and usually does) tend to zero as $n$ tends to infinity. We say that $\Gnp$ has some property \emph{asymptotically almost surely} or a.a.s.\ if the probability that $\Gnp$ has this property tends to $1$ as $n$ goes to infinity. For more about this model see, for example,~\cite{Bollobas,JLR,Karonski_Frieze}.

Given two functions $f=f(n)$ and $g=g(n)$, we will write $f(n)=\bigo(g(n))$ if there exists an absolute constant $c \in \R_+$ such that $|f(n)| \leq c|g(n)|$ for all $n$, $f(n)=\Omega(g(n))$ if $g(n)=\bigo(f(n))$, $f(n)=\Theta(g(n))$ if $f(n)=\bigo(g(n))$ and $f(n)=\Omega(g(n))$, and we write $f(n)=o(g(n))$ or $f(n) \ll g(n)$ if $\lim_{n\to\infty} f(n)/g(n)=0$. In addition, we write $f(n) \gg g(n)$ if $g(n)=o(f(n))$ and we write $f(n) \sim g(n)$ if $f(n)=(1+o(1))g(n)$, that is, $\lim_{n\to\infty} f(n)/g(n)=1$.

\subsection{Concentration Tools}\label{sec:concentration}

In this section, we state a few specific instances of Chernoff's bound that we will find useful. Let $(Z_1, \ldots, Z_n)$ be a sequence of independent ${\rm Bernoulli}(p)$ random variables. For each $j \in [n]$, let $X_j = \sum_{i=1}^j Z_i$. In particular, $X_n \in \textrm{Bin}(n,p)$ is a random variable distributed according to a Binomial distribution with parameters $n$ and $p$. Then, a consequence of \emph{Chernoff's bound} (see e.g.~\cite[Theorem~2.1]{JLR}) is that for any $\tau \ge 0$ we have
\begin{eqnarray}
\Prob( X_n - \E [X_n] \ge \tau ) &\le& \exp \left( - \frac {\tau^2}{2 (\E [X_n] + \tau/3)} \right)  \label{chern1} \\
\Prob( \E [X_n] - X_n \ge \tau ) &\le& \exp \left( - \frac {\tau^2}{2 \E [X_n]} \right).\label{chern}
\end{eqnarray}

Moreover, let us mention that the above bounds hold in a more general setting as well, that is, for any sequence $(Z_j)_{1\le j\le n}$ of independent random variables such that for every $j \in [n]$ we have $Z_j \in \textrm{Bernoulli}(p_j)$ with (possibly) different $p_j$-s (again, see~e.g.~\cite{JLR} for more details). 

Finally, we note that $X_n - \E [X_n]$ in~(\ref{chern1}) can be replaced with $\max_{1 \le j \le n} (X_j - \E [X_j])$ and $\E [X_n] - X_n$ in~(\ref{chern}) can be replaced with $\max_{1 \le j \le n} (\E [X_j] - X_j)$. That is, we have
\begin{eqnarray}
\Prob( \max_{1 \le j \le n}(X_j - \E [X_j]) \ge \tau ) &\le& \exp \left( - \frac {\tau^2}{2 (\E [X_n] + \tau/3)} \right)  \label{max-chern1} \\
\Prob( \max_{1 \le j \le n}(\E [X_j] - X_j) \ge \tau ) &\le& \exp \left( - \frac {\tau^2}{2 \E [X_n]} \right).\label{max-chern}
\end{eqnarray}
This is a consequence of a standard martingale bound (see~e.g.~\cite{mcdiarmid1998concentration} for more details). 

\subsection{Coupling}\label{sec:coupling}

Suppose that at some point of the process, the public announcement is captured by $C^t(v)$, $v \in V$. Let $\hat{C}^t(v)$ be any sequence of opinions such that the following properties hold: 
(a) if $\hat{C}^t(v) = 1$, then $C^t(v)=1$,
(b) if $\hat{C}^t(v) = 0$, then $C^t(v) \in \{0, 1,\perp \}$,
(c) if $\hat{C}^t(v) = \perp$, then $C^t(v) = \perp$.
In other words, we get the auxiliary sequence $\hat C^t(v)$ by modifying some of the opinions~1 and $\perp$ in $C^t(v)$ to~0. Hence, the process starting from $C^t(v)$ can be coupled with the auxiliary process starting from $\hat{C}^t(v)$ such that all the properties (a)--(c) are satisfied in every step of the process. In particular, if the auxiliary process converges to all nodes having opinion~1, then so does the original process. This easy observation will turn out to be useful in analyzing the process. 

Similarly, suppose private beliefs in the auxiliary process are dominated by private beliefs in the original process: for any $v \in V$, $\hat{X}(v) \le X(v)$. If the two processes are coupled, then properties (a)--(c) hold again. As before, if the auxiliary process converges to all nodes having opinion~1, then so does the original process. In particular, as mentioned above, the assumption that $\delta \in (0,1/10]$ in Theorems~\ref{thm:not_very_sparse_graphs} and~\ref{thm:very_sparse_graphs} can be relaxed to $\delta \in (0,1/2)$.

\section{Sparse Random Graphs}\label{sec:sparse_graphs}

In this section, we consider sparse random graphs, that is, we will assume that $p=o(1)$. Let $\omega=\omega(n)$ be a function that tends to infinity as $n \to \infty$, arbitrarily slowly. In particular, each time we refer to $\omega$, we will assume that $\omega \ll pn$ and $\omega \ll (1/p)^{1/2}$ so that $1/p \gg 1/ (p\omega) \gg 1/ (p\omega^2) \gg 1$. 

We will consider a few phases. During the first phase (Subsection~\ref{sec:phase1}), most of the nodes that are chosen have not yet announced their opinions ($C^{t-1}(v^t) = \perp$) and none of their neighbours have announced ($N^{t-1}_1(v^t) = N^{t-1}_0(v^t) = 0$). Hence, the announcement of $v^t$ will typically coincide with its private belief. During the second phase (Subsection~\ref{sec:phase2}), it is still the case that most selected nodes are selected for the first time but this time they might have neighbours that announced their opinions. As a result, the argument is more involved but the conclusion is that at the end of the second phase more nodes have correct opinion than not. 

The analysis of the first two phases can be applied for all sparse graphs, even below the threshold for connectivity. The analysis of the final steps of the process is slightly more involved. We first present an easy argument for not very sparse graphs (Subsection~\ref{sec:not_very_sparse_graphs}), that is, when the asymptotic expected degree degree satisfies $pn \gg \log n$. Very sparse graphs for which $pn = \Theta(\log n)$ (but, of course, above the connectivity threshold) are considered in Subsection~\ref{sec:very_sparse_graphs}.

\paragraph{Overview.} A key phenomena in asynchronous dynamics is that the process involves both information diffusion and conventional social learning. Intuitively, the process initially produces some independent beliefs/opinions pop up sporadically throughout the network. These opinions then diffuse in the network during the process as more nodes are selected to announce/update their opinion by learning from their neighbours. With this in mind, our analysis considers multiple phases of the process. We provide a brief description of the different phases below.
\begin{itemize}
    \item \textbf{Phase 1.} In the first few time steps, most nodes that are selected to announce have not been selected earlier and, more importantly, do not have neighbours who have been selected before. So almost all of the opinions in the network at the end of phase one are just the independent private beliefs of the selected nodes. Since the private signals are biased towards being correct, a strict majority of the opinions are correct at the end of the first phase.  In particular, we show that at time $T_1 = \delta/2p$, the number of nodes with opinion $1$ is at least $(1/2 + 3\delta/5)T_1$ and opinion $0$ is represented at most $(1/2 - 3\delta/5)T_1$ times. Moreover, $T_1 (1 - o(1))$ nodes have made some announcement in this phase, that is, very few nodes were selected more than once.
    \item \textbf{Phase 2.} In the second phase, again most nodes that are selected to announce have not been selected earlier. In particular, we show that at any time $t$ during the second phase (i.e., after time $T_1$ but before time $T_2 = n/\omega$), the number of nodes that were selected twice before time $t$ is $o(t)$. Moreover, since a \emph{super majority} of the opinions at the end of the previous phase were correct, we prove that nodes that are selected to announce for the first time are more likely to learn the correct opinion even if we pretend that the few nodes that are selected \emph{again} were to change their opinion to $0$. 
    \item \textbf{Phase 3 (a).} For not very sparse graphs, we are able to show all nodes which were not selected in the first two phases have more neighbours with opinion 1 than not. Again, very few nodes who were selected before are selected again before time $T_3=n/\sqrt \omega$, so even if all of them announce $0$ all nodes who make their first announcement between time $T_2$ and time $T_3$ announce the correct opinion. Finally, even if all the nodes that were selected before time $T_2$ are to have opinion $0$ and all nodes that were selected for the first time between time $T_2$ and time $T_3$ have opinion $1$, we show that a.a.s.\ all announcements after time $T_3$ are always correct.
    \item \textbf{Phase 3 (b).} For very sparse graphs, the proof of the last phase is more involved as there might be nodes whose degree is too small to guarantee that a majority of their neighbours have opinion $1$, even though there is a super majority of opinion $1$ in the network. However, we may bound the number of nodes with small degrees and show that no large degree node has more than one small degree neighbour. With this in hand, we show that after every batch of $O(n\log n)$ many time steps the number of large degree nodes with opinion $0$ shrinks by at least $(\log\log n)^{1/4}$ factor. Hence, after $o(\log n)$ many such batches all large nodes have opinion $1$. Finally, we show that no two small degree nodes are adjacent to each other, and hence all the small degree nodes will also switch to opinion $1$ by copying the opinions of their large degree neighbours.
\end{itemize}

We highlight a few simple techniques that help us in the analysis. Firstly, separating the randomness of the graph, the node selection process and the opinion formation. For example, we wait to reveal/expose the edges adjacent to a node only when she is selected to announce for the first time. This simple, well-known technique, known as the principle of deferred decisions, is very powerful and often used in analysis of randomized algorithms. The idea behind the principle is that the entire set of random choices are not made in advance, but rather fixed only as they are revealed to the algorithm. Second, considering an auxiliary dynamics that is coupled with the actual dynamics in order to ignore problematic but rare events such as the repeated nodes in the first two phases. Finally, finding independent sequences of random variables that stochastically dominate the opinion dynamics sequence in order to compute probability bounds more easily. 

\subsection{Phase 1: $T_1 = \delta/(2p)$}\label{sec:phase1}

In the analysis of the process, it will be convenient to ignore opinions of a small fraction of nodes, and consider the following \emph{auxiliary dynamics}. We will use $D^t(v) \in \{\perp, ?, 0, 1\}$ to denote the \emph{auxiliary announcement} of $v \in V$ at time $t$. For any $i \in \{\perp, ?, 0, 1\}$, let $Z_i^t$ be the number of nodes that have auxiliary opinion $i$ at time $t$, that is, $Z_i^t = | \{ v \in V : D^t(v) = i \} |$. We will explain how the values of $D^t(v)$ are determined soon but the auxiliary dynamics will be coupled with the original one and, in particular, we will make sure that the following property holds.

\begin{property}\label{prop:D}
If $D^t(v) = i$ for some $i \in \{\perp, 0, 1\}$ and time $t$, then $C^t(v)=D^t(v)$. On the other hand, if $D^t(v) = ?$, then $C^t(v) \in \{0, 1\}$. As a result, for $i \in \{0, 1\}$ and any time $t$ during the first phase, we have
\begin{equation}\label{eq:relation_YZ}
Z^t_i \le Y^t_i \le Z^t_i + Z^t_?.
\end{equation}
\end{property}

The first phase takes $T_1 = \delta/(2p) = \Theta(1/p) \gg \omega^2 \gg 1$ rounds. In order to keep the analysis easy, we postpone exposing edges of $\Gnp$ for as long as possible, and keep the following useful property.

\begin{property}\label{prop:exposingGnp}
At any time $t$, only edges of $\Gnp$ with both endpoints in the set $\{ v : D^t(v) \neq \perp \}$ are exposed. 
\end{property}

The auxiliary dynamics, coupled with the original one, that we aim to understand is defined as follows. Consider a node $v^t$ chosen at time $t$. For all other nodes $v\neq v^t$ we have $D^{t}(v) = D^{t-1}(v)$. For $v^t$ we have,
$$
D^t(v^t) = 
\begin{cases}
? & \text{ if } D^{t-1}(v^t) \neq \perp,\\
? & \text{ if } \exists\ { \text{ node } v \text{ such that } v \in N(v^t) \text{ and } } D^{t-1}(v) \neq \perp,\\
X(v^t) & \text{ otherwise}.
\end{cases}
$$
That is, if $v^t$ had announced her opinion at least once before time $t$ ($D^{t-1}(v^t), C^{t-1}(v^t) \neq \perp$), then we fix $D^{t}(v^t) = ?$. On the other had, if $v^t$ has not announced her opinion yet (that is, $D^{t-1}(v^t) = C^{t-1}(v^t) = \perp$), then we expose edges of $\Gnp$ between $v^t$ and the set $\{ v : D^{t-1}(v) \neq \perp \}$. If no edge between $v^t$ and the set $\{ v : D^{t-1}(v) \neq \perp \}$ is present, then no neighbour of $v^t$ has an announced opinion and so $D^t(v^t) = C^t(v^t) = X(v^t)$ is fixed to the private belief of $v^t$. Otherwise (that is, at least one edge is present), then we simply fix $D^t(v^t) = ?$. Let us note that, {an alternative approach would be} to investigate the value of $C^t(v^t)$ and then fix $D^t(v^t) = C^t(v^t)$. However, we expect at most $pt \le pT_1 = \delta/2$ edges between $v^t$ and $\{ v : D^{t-1}(v) \neq \perp \}$, and so there will not be many nodes $v^t$ of this type. As a result, we may simply ignore the announcements of such nodes, thus simplifying our analysis. 

Moreover, a useful implication of this approach is that in order to estimate the values of $Z^t_{\perp}$ and $Z^t_?$ in this process, we do not need to uncover nodes' private believes ($X(v)$'s). Hence, we may postpone exposing private beliefs of nodes with $D^t(v) \not\in \{ \perp, ? \}$ to the very end of this phase, and only then expose this information to determine how many nodes satisfy $D^{T_1}(v) = 1$ and how many of them satisfy $D^{T_1}(v) = 0$. Finally, it is easy to see that  Property~\ref{prop:D} is satisfied at time $T_1$ and Property~\ref{prop:exposingGnp} is satisfied in any point of the process.

Here is the main result of this subsection. 

\begin{proposition}\label{prop:end_of_phase1}
Suppose that $p = p(n) \ll 1$ and $p \gg 1/n$. Set $T_1 = \delta/(2p)$. Let $\omega=\omega(n) \ll \min\{ pn, (1/p)^{1/2}\}$ be any function that tends to infinity as $n \to \infty$. Then, a.a.s.\ the following holds:
\begin{eqnarray}
Z_?^{T_1} &\le& \frac {\delta T_1}{4} \left( 1 + \bigo( 1/\omega) \right) \label{eq:phase1_z?} \\
Z_1^{T_1} &\ge& (1/2+3\delta/5) \ T_1 \label{eq:phase1_z1} \\
Z_?^{T_1} + Z_1^{T_1} + Z_0^{T_1} &=& T_1 \left( 1 - \bigo( 1/\omega) \right). \label{eq:phase1_sumofz}
\end{eqnarray}
As a result, by Property~\ref{prop:D},
\begin{eqnarray*}
Y_1^{T_1} &\ge& (1/2+3\delta/5) \ T_1 \\
Y_0^{T_1} &\le& (1/2-3\delta/5) \ T_1 \\
Y_1^{T_1} + Y_0^{T_1} &=& T_1 \left( 1 - \bigo( 1/\omega) \right).
\end{eqnarray*}
\end{proposition}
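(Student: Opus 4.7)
The plan is to use the principle of deferred decisions to separate three layers of randomness---the node selection sequence $(v^s)_s$, the edges of $\Gnp$, and the private beliefs $X(v)$---and to expose them in that order. Write $M^t := |\{v : D^t(v) \neq \perp\}|$ for the number of nodes selected at least once by time $t$ and $R^t := t - M^t$ for the number of repeat selections. The key observation is that, under the auxiliary rule, a node acquires state $?$ in exactly one of two ways: (i) at its first selection time $s$, at least one neighbour already had $D^{s-1}(\cdot) \neq \perp$ (a ``busy neighbour''), or (ii) the node was selected at least twice during the phase. Hence $Z_?^{T_1} \le N_{\mathrm{i}} + R^{T_1}$, where $N_{\mathrm{i}}$ counts type-(i) events.

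First I would control $R^{T_1}$. Conditional on the past, the probability of a repeat at step $s \le T_1$ is $M^{s-1}/n \le T_1/n$, so $R^{T_1}$ is stochastically dominated by $\mathrm{Bin}(T_1, T_1/n)$, whose mean equals $T_1^2/n = T_1 \cdot \delta/(2pn) = o(T_1/\omega)$ by the assumption $\omega \ll pn$. A Chernoff tail from (\ref{chern1}) then gives $R^{T_1} \le T_1/\omega$ a.a.s. Next I would control $N_{\mathrm{i}}$. By Property~\ref{prop:exposingGnp}, when $v^s$ is first selected the edges between $v^s$ and $S^{s-1} := \{v : D^{s-1}(v) \neq \perp\}$ are still unexposed, and each of the $|S^{s-1}| \le T_1$ candidates is present independently with probability $p$. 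A type-(i) node contributes at least one such edge, and summing over all first-time selections in $[T_1]$ exposes exactly the edges in $G[S^{T_1}]$, so
\begin{equation*}
N_{\mathrm{i}} \;\le\; |E(G[S^{T_1}])|, \qquad \E\,|E(G[S^{T_1}])| \;\le\; \frac{p\,T_1^2}{2} \;=\; \frac{\delta T_1}{4}.
\end{equation*}
Applying (\ref{chern1}) to this Binomial with deviation $\tau = \delta T_1/(4\omega)$ gives an exponent of order $\delta/(p\omega^2) \to \infty$, where we used $\omega \ll p^{-1/2}$; hence $N_{\mathrm{i}} \le (\delta T_1/4)(1 + O(1/\omega))$ a.a.s. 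This yields (\ref{eq:phase1_z?}), and (\ref{eq:phase1_sumofz}) is then immediate from $Z_?^{T_1} + Z_1^{T_1} + Z_0^{T_1} = M^{T_1} = T_1 - R^{T_1}$.

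For (\ref{eq:phase1_z1}), since the private beliefs have not yet been consulted, once I condition on the ``fresh'' set $S_{\mathrm{fresh}} := \{v : D^{T_1}(v) \in \{0,1\}\}$ with $|S_{\mathrm{fresh}}| = M^{T_1} - Z_?^{T_1}$, the indicators $\{\mathbf{1}[X(v)=1] : v \in S_{\mathrm{fresh}}\}$ are i.i.d.\ $\mathrm{Bernoulli}(1/2+\delta)$. The two bounds above give $|S_{\mathrm{fresh}}| \ge T_1(1 - \delta/4) - O(T_1/\omega)$, so the conditional mean of $Z_1^{T_1}$ is at least $(1/2+\delta)(1-\delta/4)\,T_1 - O(T_1/\omega)$. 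For $\delta \le 1/10$ one checks $(1/2+\delta)(1-\delta/4) \ge 1/2 + 17\delta/20$, which is comfortably above the target $1/2 + 3\delta/5$, so (\ref{chern}) with a $\Theta(T_1)$ deviation closes the bound, as $T_1 \to \infty$.

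The main obstacle is the second step: one has to arrange the exposure scheme so that, over all of Phase 1, the revealed edges are precisely those in $E(G[S^{T_1}])$, allowing $N_{\mathrm{i}}$ to be dominated by a single Binomial of mean $\delta T_1/4$. The condition $\omega \ll p^{-1/2}$ is exactly what is needed for (\ref{chern1}) to succeed with a deviation of order $\delta T_1/(4\omega)$ relative to the mean $\delta T_1/4$; everything else reduces to routine concentration.
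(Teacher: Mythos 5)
Your proposal is correct and follows essentially the same route as the paper's proof: the same decomposition of $Z_?^{T_1}$ into repeat selections and first selections with an already-announced neighbour, the same use of deferred decisions for edges (Property~\ref{prop:exposingGnp}) and of deferred private beliefs for the final binomial bound on $Z_1^{T_1}$, and the same Chernoff applications with the conditions $\omega \ll pn$ and $\omega \ll (1/p)^{1/2}$ entering exactly where they do in the paper. The only (harmless) difference is that you bound the busy-neighbour count by $|E(G[S^{T_1}])|$, a binomial with mean at most $pT_1^2/2 = \delta T_1/4$ since the selected set is independent of the graph, whereas the paper dominates the per-step indicators by independent $\mathrm{Bernoulli}(pt)$ variables; both yield the same mean and the same concentration.
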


\begin{proof}
Let us start with investigating $Z_?^{T_1}$. Recall that in our auxiliary dynamics, there are two ways node $v^t$ could change its state to $D^{t}(v^t) = ?$ at time $t$. Let $I_t$ be the indicator random variable that this happens because $D^{t-1}(v^t) \neq \perp$, and let $I = \sum_{t=1}^{T_1} I_t$. Similarly, let $J_t$ be the indicator random variable that $D^{t-1}(v^t) = \perp$ but there is an edge between $v^t$ and the set $\{ v : D^{t-1}(v) \neq \perp \}$. Let $J = \sum_{t=1}^{T_1} J_t$.

Note that, at most $t-1$ distinct nodes have made an announcement before round $t$. In particular, at most one node can change its state from $D^{t-1}(v) = \perp$ to  $D^{t}(v) \neq \perp$,  deterministically, at any round $t$ of the process. 
So, the number of nodes with $D^{t-1}(v) \neq \perp$ is $n-Z^{t-1}_{\perp} \le t-1$. We get that
$$
\Pr ( I_t = 1 ) = \frac {n-Z^{t-1}_{\perp}} {n} \le \frac {t-1}{n},
$$
and so $I$ can be stochastically upper bound by $\hat{I}=\sum_{t=1}^{T_1} \hat{I}_t$ where $(\hat{I}_t)_{1\le t \le T_1}$ are independent variables and for every $t \in [T_1]$ we have $\hat{I}_t \in \textrm{Bernoulli}((t-1)/n)$. Note that, since $pn \gg \omega$, 
\begin{equation}\label{eq:nodes-selected-again}
    \E [ \hat{I} ] = \sum_{t=1}^{T_1} \frac {t-1}{n} = \frac {(T_1-1)T_1}{2n} \sim \frac {\delta T_1}{4 p n} \ll \frac {T_1}{\omega}.
\end{equation}
It follows from Chernoff's bound~(\refeq{chern1}) (and the comment right after it) applied with $t = T_1 / \omega = \Theta(1/(p\omega)) \gg \omega \gg 1$ that
$$
\Pr ( \hat{I} \ge \E[ \hat{I} ] + t ) \le \exp \left( - \frac {t^2}{(2/3+o(1)) t} \right) = \exp \left( - \Theta(t) \right) = o(1).
$$
So a.a.s.\ $I \le \hat{I} = \bigo(T_1 / \omega)$. Similarly, since $pt \le pT_1 = \delta / 2 < 1/4$,
$$
\Pr ( J_t = 1 ) = \frac {Z^{t-1}_{\perp}} {n} \left( 1 - (1-p)^{n-Z^{t-1}_{\perp}} \right) \le 1 - (1-p)^t  = 1 - \left( 1 - pt + p^2 \binom{t}{2} - \ldots \right) \le pt.
$$
As before, we stochastically upper bound $J$ by $\hat{J}=\sum_{t=1}^{T_1} \hat{J}_t$, where $\hat{J}_t \in \textrm{Bernoulli}(pt)$. We get that
$$
\E [ \hat{J} ] = \sum_{t=1}^{T_1} pt = \frac {p(T_1+1)T_1}{2} = \frac {pT_1^2}{2} \left( 1 + \bigo(1/T_1) \right) =  \frac {\delta T_1}{4} \left( 1 + \bigo(1/\omega) \right),
$$
and Chernoff's bound~(\refeq{chern1}) (applied with $t=\E [ \hat{J} ] / \omega$) implies that 
$$
\Pr ( \hat{J} \ge \E[ \hat{J} ] + t ) \le \exp \left( - \frac {\E[ \hat{J} ]}{ (2+o(1)) \omega^2} \right) = \exp \left( - \Theta(T_1/\omega^2) \right) = \exp \left( - \Theta(1/(p\omega^2)) \right) = o(1).
$$
Hence, a.a.s.\ $J \le \hat{J} \le \frac {\delta T_1}{4} \left( 1 + \bigo( 1/\omega) \right)$ and so a.a.s.\ $Z_?^{T_1} \le I + J \le \frac {\delta T_1}{4} \left( 1 + \bigo( 1/\omega) \right)$. This proves~(\ref{eq:phase1_z?}).

It remains to investigate $Z_0^{T_1}$ and $Z_1^{T_1}$. Let us summarize the situation at time $T_1$. The number of rounds when nodes were not chosen for the first time is at most $I = \bigo(T_1 / \omega)$ a.a.s. Hence, a.a.s.\ the number of nodes that were chosen at least once is equal to $T_1 - \bigo(T_1 / \omega)$. This proves~(\ref{eq:phase1_sumofz}). Moreover, it implies that a.a.s.\ the number of nodes with $D^{T_1}(v) \not\in \{ \perp, ? \}$ is equal to
$$
Z_1^{T_1} + Z_0^{T_1} = T_1 - \bigo(T_1 / \omega) - Z_?^{T_1} \ge (1-\delta/4) T_1 \left( 1 + \bigo( 1/\omega) \right).
$$
More importantly, as mentioned above, in the analysis so far we did not use their opinions which are consistent with their private beliefs. We conveniently deferred this information up to now. After exposing this information, we get that $Z_1^{T_1}$ is stochastically lower bounded by the random variable $\hat{Z}_1 \in \textrm{Bin}((1-\delta/4) T_1 - c T_1 / \omega, 1/2+\delta)$, where $c>0$ is a large enough constant. After applying Chernoff's bound~(\refeq{chern}) (with $t = T_1/\omega$) we get that 
\begin{eqnarray*}
Z_1^{T_1} \ge \hat{Z}_1 &=& (1/2+\delta) (1-\delta/4) T_1 (1+\bigo(1/\omega)) \\
&\ge& (1/2+\delta - \delta/4) T_1 (1+\bigo(1/\omega)) \\
&\ge& (1/2 + 3\delta/5) T_1
\end{eqnarray*}
with probability at least 
$$
1 - \exp( - \Theta( T_1 / \omega^2 )) = 1 - \exp( - \Theta( 1 / (p\omega^2) )) = 1 - o(1).
$$
This proves~(\ref{eq:phase1_z1}). 

The conclusion for $Y_1^{T_1}$ follows immediately from Property~\ref{prop:D}, and the bound for $Y_0^{T_1}$ is a trivial implication of the fact that $Y_1^{T_1} + Y_0^{T_1} \le T_1$. The proof of the proposition is finished.
\end{proof}

\subsection{Phase 2: $T_2 = T_2(n)$ such that $\omega/p \le T_2 \le n/\omega$}\label{sec:phase2}

By Proposition~\ref{prop:end_of_phase1}, since we aim for a statement that holds a.a.s., we may assume that at the beginning of Phase~2, 
\begin{eqnarray}
Y_1^{T_1} &\ge& (1/2+3\delta/5) \ T_1 \nonumber \\
Y_0^{T_1} &\le& (1/2-3\delta/5) \ T_1 \nonumber \\
Y_1^{T_1} + Y_0^{T_1} &=& T_1 \left( 1 + \bigo( 1/\omega) \right). \label{eq:beginning_of_phase2}
\end{eqnarray}
As in the previous phase, it will be convenient to ignore opinions of some problematic nodes and assign auxiliary announcements $D^t(v)=?$ to such nodes. We will continue using $Z_i^t$ to denote the number of nodes that have auxiliary opinion $i$ at time $t$. We fix $D^{T_1}(v) = C^{T_1}(v)$ for all $v$ so, initially, auxiliary announcements coincide with the truth announcements. However, this time we assign $D^t(v^t)=?$ only if $D^{t-1}(v^t) \neq \perp$ (that is, the node chosen at time $t$ has made an announcement in the past); otherwise, the auxiliary announcement $D^t(v^t)$ is determined immediately pretending that all neighbours $v$ of $v^t$ with $D^{t-1}(v) = ?$ announced 0. More formally, for each node $v$ and $i\in \{0,1,\perp,?\}$ let $\hat N_i^{t}(v)$ denote the number of neighbours $v'$ of $v$ with auxiliary opinion $D^{t}(v')= i$ at time $t$. Then we have,
$$
D^t(v^t) = 
\begin{cases}
? & \text{ if } D^{t-1}(v^t) \neq \perp,\\
1 & \text{ if } \hat N_1^{t-1}(v^t) > \hat N_0^{t-1}(v^t) + \hat N_?^{t-1}(v^t),\\
0 & \text{ if } \hat N_1^{t-1}(v^t) < \hat N_0^{t-1}(v^t) + \hat N_?^{t-1}(v^t),\\
X(v^t) & \text{ if } \hat N_1^{t-1}(v^t) = \hat N_0^{t-1}(v^t) + \hat N_?^{t-1}(v^t).
\end{cases}
$$
As a consequence, $D^t(v)$ and $C^t(v)$ are coupled so that the following property is satisfied.

\begin{property}\label{prop:D'}
If $D^t(v) = i$ for some $i \in \{\perp, 1\}$ and time $t$, then $C^t(v)=D^t(v)$. On the other hand, if $D^t(v) = i$ for some $i \in \{0, ?\}$, then $C^t(v) \in \{0, 1\}$. As a result, for any time $t$ during the second phase, we have $Y^t_1 \ge Z^t_1$.
\end{property}

As before, it is easy to see that Property~\ref{prop:exposingGnp} is also satisfied during this phase. Here is the main result of this subsection. 

\begin{proposition}\label{prop:end_of_phase2}
Suppose that $p = p(n) \ll 1$ and $p \gg 1/n$. Let $\omega=\omega(n) \ll \min\{ pn, (1/p)^{1/2}\}$ be any function that tends to infinity as $n \to \infty$. Set $T_2 = T_2(n)$ such that $\omega/p \le T_2 \le n/\omega$. Then, a.a.s.\ the following holds:
\begin{eqnarray}
Z_?^{T_2} &=& \bigo( T_2/\omega)  \nonumber \label{eq:phase2_z?} \\
Z_1^{T_2} &\ge& (1/2+\delta/2) \ T_2 \nonumber \label{eq:phase2_z1} \\
Z_?^{T_2} + Z_1^{T_2} + Z_0^{T_2} &=& T_2 \left( 1 - \bigo( 1/\omega) \right).  \nonumber \label{eq:phase2_sumofz}
\end{eqnarray}
As a result, by Property~\ref{prop:D'},
\begin{eqnarray*}
Y_1^{T_2} &\ge& (1/2+\delta/2) \ T_2 \\
Y_0^{T_2} &\le& (1/2-\delta/2) \ T_2 \\
Y_1^{T_2} + Y_0^{T_2} &=& T_2 \left( 1 - \bigo( 1/\omega) \right).
\end{eqnarray*}
\end{proposition}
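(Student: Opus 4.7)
The proof follows the structure of the proof of Proposition~\ref{prop:end_of_phase1}: first control the number of ``$?$'' demotions during phase~2 (see Property~\ref{prop:D'}), then count the first-time selections, and finally argue that the first-time selections lean toward~$1$ with the required bias. The new ingredient relative to phase~1 is that first-time nodes now typically \emph{do} have previously-announced neighbours, so the auxiliary update depends on the current balance of $Z_1^{t-1}$ against $Z_0^{t-1}+Z_?^{t-1}$ rather than on the private belief alone.

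\paragraph{Counting ``$?$'' and first-time announcements.}
Because we reset $D^{T_1}(v)=C^{T_1}(v)$ at the start of phase~2, we have $Z_?^{T_1}=0$, and so $Z_?^{T_2}$ equals the number of rounds $t\in(T_1,T_2]$ with $D^{t-1}(v^t)\neq\perp$. Since at most $t$ nodes have ever been selected by round $t$, the conditional probability of such a repeat is at most $t/n$. Stochastically dominating this count by a sum of independent Bernoullis and invoking Chernoff~(\ref{chern1}) exactly as in the proof of Proposition~\ref{prop:end_of_phase1} (using $T_2\le n/\omega$) gives $Z_?^{T_2}=\bigo(T_2/\omega)$ a.a.s. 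Together with $T_1\le T_2/\omega$ and the initial count $T_1(1+\bigo(1/\omega))$ of nodes already announced at time $T_1$, this shows that phase~2 contributes $T_2(1-\bigo(1/\omega))$ first-time announcements, giving the third claimed identity.

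\paragraph{Bias of a first-time announcement.}
For a first-time selection $v^t$, I defer exposure (Property~\ref{prop:exposingGnp}) of the edges from $v^t$ into $A^{t-1}:=\{v:D^{t-1}(v)\neq\perp\}$ until time $t$. Writing $a_1:=Z_1^{t-1}$ and $a_0:=Z_0^{t-1}+Z_?^{t-1}$, the two neighbour counts are independent, with $X_1\sim\textrm{Bin}(a_1,p)$ and $X_0\sim\textrm{Bin}(a_0,p)$, and hence
\[
\Prob(D^t(v^t)=1)=\Prob(X_1>X_0)+(1/2+\delta)\,\Prob(X_1=X_0).
\]
The key claim to establish is: if $a_1/(a_1+a_0)\ge 1/2+\gamma$ for some $\gamma\in[0,\delta]$ (or $a_1=a_0=0$), then the above probability is at least $1/2+\gamma$. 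I would prove this by conditioning on $M=X_1+X_0$: for $M=0$ the private-belief tiebreak gives $1/2+\delta$; for $M=1$ the probability equals $a_1/(a_1+a_0)\ge 1/2+\gamma$; and for $M\ge 2$ I would decompose $X_1\stackrel{d}{=}X_0'+B$ with $X_0'$ an independent copy of $X_0$ and $B\sim\textrm{Bin}(a_1-a_0,p)$, then exploit the symmetry of $X_0'-X_0$ about~$0$ to lower bound each conditional slice by $a_1/(a_1+a_0)$.

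\paragraph{Maintenance and conclusion.}
Proposition~\ref{prop:end_of_phase1} supplies the initial ratio $Z_1^{T_1}/A^{T_1}\ge 1/2+3\delta/5-\bigo(1/\omega)$, a constant margin above the threshold $1/2+\delta/2$. Using the key claim with threshold $\gamma:=11\delta/20$, I would show inductively that $Z_1^t/A^t\ge 1/2+11\delta/20$ throughout phase~2. The quantity $\Delta^t:=2Z_1^t-A^t$ has per-round drift at least $11\delta/10$ on first-time rounds (since the decision is~$1$ with probability at least $1/2+11\delta/20$ by the claim) and drift at most $-2$ per repeat, so the total drift on $[T_1,T_2]$ is $(11\delta/10)T_2-\bigo(T_2/\omega)$. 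The maximal Chernoff bounds~(\ref{max-chern1})--(\ref{max-chern}) control the associated martingale fluctuations by $o(T_2)$ uniformly on $[T_1,T_2]$, so the threshold is maintained a.a.s.\ and $\Delta^{T_2}\ge(11\delta/10)T_2(1-o(1))$. Combined with $A^{T_2}=T_2(1-\bigo(1/\omega))$, this yields $Z_1^{T_2}\ge(1/2+11\delta/20)T_2(1-o(1))\ge(1/2+\delta/2)T_2$ for large~$n$, the slack $11\delta/20-\delta/2=\delta/20$ absorbing the lower-order losses. The bounds on $Y_1^{T_2}$ and $Y_0^{T_2}$ then follow from Property~\ref{prop:D'} together with the sum identity. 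The main obstacle is closing the induction without circularity, since the hypothesis of the key claim is precisely the quantity being maintained; this is resolved by a stopping-time argument where the constant gap between the initial ratio $3\delta/5$ and the maintained threshold $11\delta/20$ guarantees a uniformly positive drift until high-probability concentration takes over. The intermediate regime $pA=\Theta(1)$ of the key claim is also technically delicate, since it falls between the private-belief and central-limit asymptotics and is not covered by a single limiting estimate, making the explicit coupling described above necessary.
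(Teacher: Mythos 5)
Your overall scaffolding (controlling the number of repeat selections, deferring edge exposure, a stopping time plus domination by i.i.d.\ Bernoullis and the maximal Chernoff bounds) matches the paper's, but the quantitative engine of the induction is missing, and this is a genuine gap. Your key claim only yields a per-round success probability of at least $1/2+\gamma$ \emph{when the maintained ratio is} $1/2+\gamma$ --- with equality essentially attained on the slice $M=1$ (a single announced neighbour), where the conditional probability is exactly $a_1/(a_1+a_0)$, and your $M\ge 2$ bound is again only the population fraction. So along the phase you have no per-round surplus above the threshold you are trying to maintain; the only slack is the initial gap $3\delta/5-11\delta/20=\delta/20$, which contributes a buffer of order $T_1=\delta/(2p)$. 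That buffer is asymptotically negligible against the losses you yourself must charge: the number of repeat rounds up to time $t$ is $\Theta(t^2/n)$, which at $t=T_2$ can be as large as $\Theta(T_2/\omega)$, and since $T_2\ge \omega/p$ we have $T_2/\omega \ge 1/p \gg T_1$ in general (e.g.\ $p=n^{-1/2}$, $\omega=\log n$, $T_2=n/\omega$ gives losses $\Theta(n/\log^2 n)$ versus a buffer $\Theta(n^{1/2})$); the uniform martingale fluctuations over $T_2\gg T_1$ rounds cause the same problem. Hence, by your own bookkeeping, $\Delta^t=2Z_1^t-A^t$ can drop below $(11\delta/10)A^t$ well before $T_2$, the hypothesis of your key claim is lost, and the asserted drift with it; the ``constant gap between $3\delta/5$ and $11\delta/20$'' does not resolve the circularity, because it is a one-time gain of size $\Theta(T_1)$, not a per-round one.

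The paper closes exactly this hole by making the per-round success probability \emph{strictly} larger than the maintained ratio: it keeps the weaker invariant $Z_1^t/(Z_0^t+Z_?^t)\ge(1/2+\delta/2)/(1/2-\delta/2)$ (see~(\ref{eq:ratio_phase2})), proves via Lemma~\ref{lem:binomial} that conditioned on $k\ne 1$ announced neighbours the success probability is at least $1/2+51\delta/100$, and bounds the probability of having exactly one announced neighbour by $p_1\le 1/e+o(1)<1/2$ (see~(\ref{eq:bound_p1})); mixing gives at least $1/2+101\delta/200$, a surplus of $\delta/200$ \emph{per first-time round}. This surplus grows linearly in $t$ and therefore dominates both the $\bigo(t/\omega)$ repeat-round losses and the deviation allowance $\frac{\delta}{400}(T_1+j)$ in the maximal Chernoff step, which is what lets the stopping-time argument conclude $S=T_2$ a.a.s. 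To repair your proof you would need an amplification statement of this kind (your hypergeometric slices for $M\ge2$ do amplify, but you deliberately prove only the non-amplified bound), together with a bound showing the non-amplifying slice $M=1$ occurs with probability bounded away from one; as written, the maintenance step does not go through. (Secondarily, your sketched proof of the $M\ge2$ slice via the decomposition $X_1\stackrel{d}{=}X_0'+B$ is itself not worked out --- the conditional law is hypergeometric and the claimed slice-by-slice bound needs an argument --- but this is a lesser issue than the missing surplus.)
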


Before we move to the proof of this proposition, let us make some simple but useful observations. First, note that only a negligible fraction of the nodes have an opinion that we do not control.

\begin{lemma}\label{lem:repetitions_phase2}
Suppose that $p = p(n) \ll 1$ and $p \gg 1/n$. Let $\omega=\omega(n) \ll \min\{ pn, (1/p)^{1/2}\}$ be any function that tends to infinity as $n \to \infty$. Set $T_2 = T_2(n)$ such that $\omega/p \le T_2 \le n/\omega$. Then, a.a.s., for any $t$ such that $T_1 \le t \le T_2$, $Z_?^t \le 2t/\omega$.
\end{lemma}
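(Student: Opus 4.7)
The plan is to reduce $Z_?^t$ to a sum of (dominated by) independent Bernoullis, and then to upgrade a single-point Chernoff bound to a uniform-in-$t$ bound via a dyadic covering.

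First I would set up the bound at the level of single-round increments. Since we explicitly reset $D^{T_1}(v) := C^{T_1}(v) \in \{\perp,0,1\}$ at the start of Phase~2, we have $Z_?^{T_1}=0$, so the case $t=T_1$ is trivial. For $t>T_1$, the only way the auxiliary count $Z_?^t$ can grow is when the freshly selected node $v^t$ already satisfies $D^{t-1}(v^t)\neq\perp$, i.e.\ $v^t$ had been selected in some earlier round. Letting $A(t'):=\mathbf 1[D^{t'-1}(v^{t'})\neq\perp]$, we obtain $Z_?^t \le \sum_{t'=T_1+1}^{t} A(t')$. Conditional on the history through round $t'-1$, $\Pr(A(t')=1) = (n-Z_\perp^{t'-1})/n$, and since at most $t'-1$ distinct nodes can have been selected by round $t'-1$, this is at most $(t'-1)/n$. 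A standard coupling then lets me dominate $A(t')$ by independent $\hat A(t')\sim\textrm{Bernoulli}((t'-1)/n)$, and setting $\hat S_t:=\sum_{t'=T_1+1}^t \hat A(t')$, I get $Z_?^t\le \hat S_t$ for every $t$ in the range, together with $\E[\hat S_t]\le t^2/(2n)\le t/(2\omega)$, using the assumption $T_2\le n/\omega$.

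Next, to turn this into a uniform bound I use a dyadic grid. Let $t_k:=T_1\cdot 2^k$ for $k=0,1,\ldots,K$ with $K:=\lceil \log_2(T_2/T_1)\rceil = O(\log n)$, so that every $t\in[T_1,T_2]$ lies in some $[t_k,t_{k+1}]$. For each $k$ set $\tau_k := t_k/(2\omega)$, so that $\E[\hat S_{t_k}]\le\tau_k$, and apply Chernoff~\eqref{chern1}:
\[
\Pr\bigl(\hat S_{t_k} \ge t_k/\omega\bigr) \le \exp\!\left(-\frac{\tau_k^2}{2(\tau_k+\tau_k/3)}\right) = \exp\!\left(-\tfrac{3}{8}\,\tau_k\right) = \exp\bigl(-\Omega(t_k/\omega)\bigr).
\]
The assumption $\omega\ll(1/p)^{1/2}$ gives $T_1 = \delta/(2p)\gg \omega^2$, hence $t_k/\omega \ge T_1/\omega \gg \omega \to \infty$, so the $k=0$ term is already $o(1)$; the remaining terms decay double-geometrically (the exponent doubles each time $k$ increases by~1), so $\sum_{k=0}^{K}\Pr(\hat S_{t_k}\ge t_k/\omega) = O\bigl(\exp(-\Omega(T_1/\omega))\bigr) = o(1)$, and the entire union bound holds a.a.s.

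Finally, I interpolate. On the good event, for any $t\in[t_k,t_{k+1}]$ (with $t_{k+1}\le T_2$ understood, the boundary cases being handled analogously), monotonicity of $\hat S$ gives $\hat S_t \le \hat S_{t_{k+1}} \le t_{k+1}/\omega = 2t_k/\omega \le 2t/\omega$, and therefore $Z_?^t\le \hat S_t \le 2t/\omega$ simultaneously for all $T_1\le t\le T_2$. The only subtlety is the uniform-in-$t$ aspect of the bound: a direct one-shot Chernoff at $t=T_2$ would only yield $Z_?^{T_2}\le 2T_2/\omega$, which is too weak for small $t$. The dyadic trick handles this, and it is made painless by the fact that $T_1\gg\omega^2$ makes the probabilities small enough to accommodate the $O(\log n)$-sized union (in fact with enormous room to spare).
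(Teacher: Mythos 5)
Your proof is correct, and it reaches the lemma by a somewhat different route than the paper. Both arguments share the same core: the increments of $Z_?^t$ are dominated, conditionally on the history, by independent Bernoulli variables with success probabilities at most $(t'-1)/n$, after which a binomial tail bound does the work. The difference is in how uniformity over $t$ is obtained. You start from $Z_?^{T_1}=0$, work only on $[T_1,T_2]$, and use a dyadic grid $t_k=T_1 2^k$ with Chernoff's bound~(\ref{chern1}) at each grid point, exploiting that the failure probability $\exp(-\Omega(t_k/\omega))$ is already $o(1)$ at the coarsest scale (since $T_1/\omega\gg\omega$) and improves geometrically, then interpolating by monotonicity at the cost of the factor $2$. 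The paper instead proves a slightly stronger statement -- a bound on the number of nodes selected at least twice, for \emph{all} $1\le t\le n/\omega$, including times before $T_1$ -- and handles uniformity by splitting the range into three windows: for $t\le n^{2/5}$ and $n^{2/5}\le t\le n^{3/5}$ it uses Markov's inequality together with monotonicity of the count, and only for $t\ge n^{3/5}$ does it apply Chernoff pointwise with error $\bigo(n^{-2})$ and a union bound over individual times. Your dyadic scheme is arguably cleaner and avoids the need for per-point polynomial error bounds (which is exactly why the paper cannot union-bound near $t\approx T_1$, where the per-point failure probability need not be polynomially small); the paper's version buys a marginally stronger conclusion (repetitions counted from time $1$, over the full range up to $n/\omega$), which is convenient when the lemma is invoked in the proof of Proposition~\ref{prop:end_of_phase2}, though your bound on the number of repeat-selection rounds in $[T_1,T_2]$ serves the same purpose there. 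The minor points you gloss over -- the coupling justifying domination by independent Bernoullis, Chernoff for heterogeneous success probabilities, and the boundary dyadic block being truncated at $T_2$ -- are all standard and consistent with the tools the paper sets up in Section~\ref{sec:concentration}.
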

\begin{proof}
In fact, we will prove something stronger. Let $X_t$ be the number of nodes that were selected at least two times up to time $t$ (which could happen before time $T_1$). We will prove that a.a.s.\ for any $1 \le t \le n/\omega$, $X_t \le 2t/\omega$. 

Case 1: $1 \le t \le n^{2/5}$. As argued in the proof of Proposition~\ref{prop:end_of_phase1} (see~\eqref{eq:nodes-selected-again}), one can bound the expected value of $X_{n^{2/5}}$ as follows:
$$
\E[ X_{n^{2/5}} ] \le \sum_{t \le n^{2/5}} \frac {t-1}{n} \sim \frac {n^{4/5}}{2n} = o(1).
$$
Since $X_t$ is non-decreasing, it follows immediately from Markov's inequality that a.a.s.\ $X_t \le X_{n^{2/5}} \le \E[X_{n^{2/5}}] n^{1/5} < 1$ for all $t$ such that $1 \le t \le n^{2/5}$.

Case 2: $n^{2/5} \le t \le n^{3/5}$. As before, we observe that $\E[ X_{n^{3/5}} ] = \bigo(n^{6/5}/n) = \bigo( n^{1/5} )$ and so using Markov's inequality again we get that a.a.s.\ $X_t \le X_{n^{3/5}} \le \E [X_{n^{3/5}}] \log n = \bigo( n^{1/5} \log n) \le 2t / \omega$ for all $t$ such that $n^{1/3} \le t \le n^{2/3}$.

Case 3: $n^{3/5} \le t \le n/\omega$. Fix any $t$ in this range. This time we stochastically upper bound $X_t$ by $X'_t \in \textrm{Bin}(t,t/n)$ with $\E [X'_t] = t^2/n \ge n^{1/5} \gg \log n$. Chernoff's bound~(\ref{chern1}) implies that $X_t \le X'_t \le 2 \E[X'_t] = 2t^2/n \le 2t/\omega$ with probability $1-\bigo(n^{-2})$. The desired result holds by the union bound over all $t$ in this range. 
\end{proof}

Let us fix $k \in \N$ and consider random variable $X_k \in \textrm{Bin}(k,1/2+\delta/2)$. We will need to understand the following sequence of constants (the connection to our problem will become clear soon): 
\begin{equation}\label{def:q}
q_k := \Prob( X_k > k/2 ) + \Prob ( X_k = k/2 ) \cdot (1/2+\delta).
\end{equation}
Clearly, $q_0 = 1/2 + \delta$ and $q_1 = 1/2 + \delta/2$. For any other value of $k \ge 2$, $q_k \ge 1/2 + 51\delta/100$ as we show in the next technical lemma. The proof of this fact can be found in the appendix. 
\begin{lemma}\label{lem:binomial}
Fix $k \in \N$ such that $k \ge 2$, and $\delta \in (0,1/10]$. Then, 
$$
q_k \ge \frac 12 + \frac {51}{100} \delta.
$$
\end{lemma}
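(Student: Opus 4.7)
The plan is to prove the inequality by first establishing the monotonicity statement $q_{k+2} \ge q_k$ for all $k \ge 2$ (with $\delta \in (0, 1/10]$), and then reducing the claim to the two base cases $k = 2$ and $k = 3$, which can be handled by direct computation. The fact that $q_k - 1/2$ is an odd function of $\delta$ (by the symmetry $X_k \leftrightarrow k - X_k$ combined with $\delta \mapsto -\delta$) is useful for checking the base cases cleanly but is not essential to the strategy.

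For the monotonicity, the plan is to couple $X_{k+2} = X_k + Z_1 + Z_2$, where $Z_1, Z_2$ are independent ${\rm Bernoulli}(p)$ random variables, with $p = 1/2 + \delta/2$ and $q = 1-p$. Conditioning on the value of $Z_1+Z_2 \in \{0,1,2\}$ expresses $q_{k+2}$ as a linear combination of point- and tail-probabilities of $X_k$. After cancellation the odd case splits off trivially: for $k = 2m+1$, using $\binom{2m+1}{m} = \binom{2m+1}{m+1}$ one obtains
\begin{equation*}
q_{k+2} - q_k \;=\; p^2 \Pr(X_k = m) - q^2 \Pr(X_k = m+1) \;=\; \delta \, \binom{2m+1}{m}(pq)^{m+1} \;\ge\; 0,
\end{equation*}
valid for all $\delta \in [0,1/2)$. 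The even case $k = 2m$ is more delicate because the tie-breaking factor $(1/2 + \delta)$ appears in both $q_k$ and $q_{k+2}$; after careful bookkeeping (using $p^2 + q^2 = 1/2 + \delta^2/2$, $p^2 - q^2 = \delta$, and $\binom{2m}{m-1} = \binom{2m}{m} \cdot m/(m+1)$), the dust settles to
\begin{equation*}
q_{k+2} - q_k \;=\; \delta \, \binom{2m}{m}(pq)^m \left[ 2pq \cdot \frac{m}{m+1} \;-\; \frac{\delta^2}{2} \right].
\end{equation*}
The bracket is non-negative iff $\delta^2 \le 2m/(2m+1)$, which holds comfortably for every $m \ge 1$ and every $\delta \le 1/10$ (the right-hand side is at least $2/3$).

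Iterating monotonicity yields $q_k \ge q_2$ for all even $k \ge 2$ and $q_k \ge q_3$ for all odd $k \ge 3$, reducing the lemma to two explicit computations:
\begin{equation*}
q_2 \;=\; \tfrac12 + \delta\bigl(1 - \tfrac{\delta^2}{2}\bigr), \qquad q_3 \;=\; \tfrac12 + \delta\bigl(\tfrac34 - \tfrac{\delta^2}{4}\bigr).
\end{equation*}
For $\delta \in (0, 1/10]$, the parenthesised factors are at least $199/200$ and $299/400$ respectively, both comfortably exceeding $51/100$, so $q_k \ge 1/2 + 51\delta/100$ in all cases.

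The main obstacle is the even case of the monotonicity: the tie-breaking factor obstructs a direct stochastic-dominance argument and forces the algebraic route above. The constant $51/100$ is barely above $1/2$ precisely because the limiting slope $q_3'(0) = 3/4$ governs the tightest base case, and the hypothesis $\delta \le 1/10$ enters genuinely when bounding the cubic correction in $q_3$ and when ensuring non-negativity of the bracketed term in the even monotonicity calculation.
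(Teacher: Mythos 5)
Your proof is correct, but it takes a genuinely different route from the paper. The paper argues separately for each fixed $k$: it pairs each term of the lower tail of $\mathrm{Bin}(k,1/2+\delta/2)$ with its reflection in the upper tail, multiplies by likelihood-ratio factors $\bigl(\tfrac{1/2+\delta/2}{1/2-\delta/2}\bigr)^j$, bounds the central term by $\tfrac{3}{8}\le\tfrac{9}{10}(A+B)$ in the odd case, and finishes with an expansion in $\delta$ carrying $\bigo(\delta^2)$ error terms whose validity for $\delta\le 1/10$ is asserted rather than computed. You instead prove the parity-preserving monotonicity $q_{k+2}\ge q_k$ via the coupling $X_{k+2}=X_k+Z_1+Z_2$ and then check the base cases $q_2=\tfrac12+\delta(1-\delta^2/2)$ and $q_3=\tfrac12+\delta(3/4-\delta^2/4)$ exactly; I verified your increment formulas, and the even case indeed reduces to $q_{k+2}-q_k=\delta\binom{2m}{m}(pq)^m\cdot\frac{m-\delta^2(2m+1)}{2(m+1)}$, which matches your bracketed expression. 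Your route buys exact identities in place of the paper's asymptotic-in-$\delta$ bookkeeping, gives the extra structural fact that $q_k$ is monotone along each parity class, and in fact establishes the bound for all $\delta\in(0,1/2)$, not just $\delta\le 1/10$; the paper's per-$k$ argument avoids any induction but is less tight. Two small inaccuracies in your write-up, neither affecting correctness: the non-negativity condition for the even-case bracket is $\delta^2\le \frac{m}{2m+1}$ (which is at least $1/3$ for $m\ge1$), not $\delta^2\le\frac{2m}{2m+1}$; and consequently the hypothesis $\delta\le 1/10$ is not genuinely needed anywhere in your argument --- any $\delta$ bounded below $1/\sqrt{3}$ works for the monotonicity and any $\delta<1/2$ for the base cases.
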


Now, we are ready to go back to analyzing the behaviour of the process during the second phase. 

\begin{proof}[Proof of Proposition~\ref{prop:end_of_phase2}]
Our goal is to show that a.a.s.\ the following inequalities hold for any $t$ such that $T_1 \le t \le T_2$:
\begin{eqnarray}
\frac {Z_1^t}{Z_0^t + Z_?^t} &\ge& \frac {1/2+\delta/2}{1/2-\delta/2} \label{eq:ratio_phase2} \\
Z_?^t &\le& 2t / \omega. \label{eq:ratio_phase2'}
\end{eqnarray}
Formally, we define the stopping time $S$ to be the minimum value of $t \ge T_1$ such that either~(\ref{eq:ratio_phase2}) fails,~(\ref{eq:ratio_phase2'}) fails or $t = T_2$. (A stopping time is any random variable $S$ with values in $\{T_1, T_1+1, \ldots, T_2\}$ such that, for any time $\hat t$, it is determined whether $S = \hat{t}$ from knowledge of the process up to and including time $\hat{t}$.)

Property~(\ref{eq:ratio_phase2'}) is trivially satisfied at the beginning of the second phase as $Z_?^{T_1} = 0$. (Recall that we used Property~\ref{prop:D} to replace all auxiliary opinions $?$s by $0$s so at the beginning of the second phase there are only opinions $1$ and $0$ present---see~(\ref{eq:beginning_of_phase2}).) By Proposition~\ref{prop:end_of_phase1}, since we aim for a statement that holds a.a.s., we may assume that~(\ref{eq:ratio_phase2}) is satisfied at the beginning of the second phase. In fact, 
$$
\frac {Z_1^{T_1}}{Z_0^{T_1} + Z_?^{T_1}} {= \frac {Y_1^{T_1}}{Y_0^{T_1} + 0}} \ge \frac {1/2+101\delta/200}{1/2-101\delta/200} \ge \frac {1/2+\delta/2}{1/2-\delta/2}.
$$
It will be convenient to define $Z^t = Z_1^t + Z_0^t + Z_?^t$; that is, $Z^t$ is the number of nodes that announced their opinions {by time $t$}. If~(\ref{eq:ratio_phase2'}) is satisfied, then only a negligible fraction of nodes were selected more than once and we get that $Z^t = t (1- \bigo(1/\omega)) \sim t$.

Let us first show that if~(\ref{eq:ratio_phase2}) and~(\ref{eq:ratio_phase2'}) are satisfied at time $t$ and the node selected at {time} $t+1$ was not selected before (that is, $D^{t}(v^{t+1})=C^{t}(v^{t+1}) = \perp$), then the probability that $v^{t+1}$ announces an auxiliary opinion~1 is at least $1/2 + 101/200\delta$.

We first expose edges from $v^{t+1}$ to the set $\{ v : D^{t}(v) \neq \perp \}$ (see Property~\ref{prop:exposingGnp}) and let us define $p_k$ to be the probability that $v^{t+1}$ has precisely $k$ neighbours in that set. In particular, we have 
\begin{eqnarray}
    p_1  & = Z^tp(1-p)^{Z^t - 1} & = \lambda(1-p)^{\lambda/p -1} \notag\\
    && \le \lambda e^{-\lambda}/(1-p)\notag\\
    && \le 1/e + o(1) ~<~ 1/2, \label{eq:bound_p1}
\end{eqnarray}
where $\lambda = pZ^t = pt(1-\bigo(1/\omega))$ and the second inequality follows because $xe^{-x} \le e^{-1}$ and $p = o(1)$. 

Now, condition on $v^t$ having exactly $k$ neighbours that already announced their opinion. Note that we did not expose the neighbours yet (only the number of them) so neighbours form a random set of cardinality $k$ from the set $\{ v : D^{t-1}(v) \neq \perp \}$. Let $r_k$ to be the probability that $v^t$ announces auxiliary opinion 1 in this conditional probability space. It happens if more than $k/2$ neighbours of $v^t$ have $D^{t-1}(v)=1$. Moreover, if exactly $k/2$ neighbours have this property, then $v^t$ announces opinion~1 with probability $1/2+\delta$, which is the probability that its private belief is~1. Since~(\ref{eq:ratio_phase2}) holds, $r_k$ can be lower bounded by $q_k$ which we defined in~(\ref{def:q}). It follows that the probability that $v^t$ announces~1 is asymptotic to 
\begin{eqnarray}
\sum_{k \ge 0} r_k \cdot p_k &\ge& \sum_{k \ge 0} q_k \cdot p_k = q_1 p_1 +  \sum_{k \ge 0, k\neq 1} q_k \cdot p_k \nonumber\\
&\ge& \left( \frac 12 + \frac {\delta}{2} \right) p_1 + \left( \frac 12 + \frac {51}{100} \delta \right) (1-p_1) \nonumber \\
&=& \left( \frac 12 + \frac {51}{100}\delta \right)  - p_1 \left(\frac {1}{100} \delta \right) \nonumber\\
&\ge& \frac 12 + \frac {101}{200} \delta, \label{eq:lower_bound_p}
\end{eqnarray}
where the second inequality follows from Lemma~\ref{lem:binomial} and $q_0 = (1/2 +\delta) \ge (1/2 + 51\delta/100)$, and the last one from~(\ref{eq:bound_p1}).

Let $s$ be the number of rounds $t$ in the second phase {in which} $v^t$ was not selected before, i.e., $D^{t-1}(v^t) = \perp$, and let $t_1,t_2,\ldots, t_s$ denote such rounds.
Clearly, $s \le T_2 - T_1 = T_2 (1 - \bigo(1/\omega))$ but, in fact, a.a.s.\ we have $s = T_2 (1 - \bigo(1/\omega))$ by Lemma~\ref{lem:repetitions_phase2}. Indeed, it follows from Lemma~\ref{lem:repetitions_phase2} that a.a.s.\ the number of rounds in which $v^t$ was selected before is at most $2 T_2 / \omega$.
For $i \in [s]$, let $L_i$ be the indicator random variable for the event that $v^{t_i}$ announced an auxiliary opinion~1, that is, $L_i = Z_1^{t_i}-Z_1^{t_i-1}$.
If both~(\ref{eq:ratio_phase2}) and~(\ref{eq:ratio_phase2'}) hold at time $t_i-1$, then $\Prob(L_i = 1) \ge 1/2 + 101 \delta / 200$ but, of course, we cannot condition on these two properties to hold. {Instead, we will use a small trick and consider an auxiliary sequence of random variables after the stopping time $S$ when one of the properties fails.} 

Fix $\hat p=1/2 + 101 \delta / 200$ and let $M_1, \ldots, M_s$ be a sequence of independent Bernoulli variables with parameter $\hat p$. For each $i \in [s]$, we define $L'_i = L_i$ if both~(\ref{eq:ratio_phase2}) and~(\ref{eq:ratio_phase2'}) hold at times $t < t_i$ and otherwise $L'_i = M_i$. That is, the process ``stops'' at our stopping time $S$ which, in our context, means that it simply follows {part of the sequence $(M_i)_{i=1}^s$ (namely, $(M_i)_{i=S+1}^s$)} from that point on, ignoring the behaviour of the original process. Thus, defining $L'_{\le j} = \sum_{i=1}^{j} L'_i$ and $M_{\le j} = \sum_{i=1}^{j} M_i$, (\ref{eq:lower_bound_p}) implies that one can couple $L'_{\le j}$ and $M_{\le j}$ such that $L'_{\le j} \ge M_{\le j}$ for all $j \in [s]$.

Note that $\E [M_{\le j}] = \hat p j = (1/2 + 101 \delta / 200)j$ for any $j \in [s]$. If follows from Chernoff's bound~(\ref{max-chern}), 
\begin{align*}
\Prob & \left( \exists_{1 \le j \le s} \ \E [M_{\le j}] - M_{\le j} \ge \frac {\delta}{400} (T_1+j) \right) \\
&\le \sum_{a \ge 1} \ \Prob \left( \max_{(2^{a-1}-1)T_1 < j \le (2^a-1)T_1} \Big( \E [M_{\le j}] - M_{\le j} \Big) \ge 2^{a-1} \frac {\delta}{400} T_1 \right) \\
&\le \sum_{a \ge 1} \ \exp \left( - \Theta( 2^a T_1) \right) = \exp \left( - \Theta( T_1) \right) = o(1),
\end{align*}
since $T_1 = \Theta(1/p) \to \infty$. In other words, a.a.s.\ for any $j \in [s]$,
$$
L'_{\le j} \ge M_{\le j} \ge \left( \frac 12 + \frac {101}{200} \delta \right) j - \frac {\delta}{400} (T_1+j).
$$
Since $L_{\le j} = L'_{\le j}$ for any $j \in [s]$ such that $t_j < S$, and $Z_1^t$ can decrease by at most one in a single round, a.a.s.
\begin{eqnarray*}
Z_1^S &\ge& Z_1^{S-1} - 1 \\
&\ge& Z_1^{T_1} + L_{\le S-T_1-\bigo(S/\omega)} - \bigo(S/\omega) \\
&\ge& \left( \frac 12 + \frac {101}{200} \delta \right) T_1 + \left( \frac 12 + \frac {101}{200} \delta \right) (S-T_1-\bigo(S/\omega)) - \frac {\delta}{400} (S-\bigo(S/\omega)) - \bigo(S/\omega) \\
&\ge& \left( \frac 12 + \frac {201}{400} \delta \right) S - \bigo(S/\omega) \\
&\ge& \left( \frac 12 + \frac {\delta}{2} \right) S,
\end{eqnarray*}
implying that ~\eqref{eq:ratio_phase2} holds at time $S$. Indeed, there were $Z_1^{T_1}$ nodes with auxiliary opinion 1 at the beginning of the second phase. By Lemma~\ref{lem:repetitions_phase2}, $S-T_1-\bigo(S/\omega)$ nodes were selected for the first time before the stopping time and $L_{\le S-T_1-\bigo(S/\omega)}$ of them announced~1 at that time. Finally, at most $\bigo(S/\omega)$ nodes that already announced their opinion were selected again. It implies that a.a.s.\ the process does not ``stop'' because of~(\ref{eq:ratio_phase2}) failing. By Lemma~\ref{lem:repetitions_phase2}, a.a.s.\ it also does not stop because of~(\ref{eq:ratio_phase2'}). Hence, a.a.s.\ $S = T_2$ and the proof of the proposition is finished.
\end{proof}

\subsection{Not Very Sparse Random Graphs}\label{sec:not_very_sparse_graphs}

In this subsection, we provide a relatively easy argument that works for random graphs with $pn \gg \log n$. In particular, we show that a.a.s.\ after round $T_2$ but before round $T_3 = n/\sqrt \omega$ all nodes that are selected for the first time announce $1$. Moreover, after round $T_3$ every node selected announces $1$ a.a.s.

\begin{proof}[Proof of Theorem~\ref{thm:not_very_sparse_graphs}]
Let $\omega=\omega(n) \ll \min\{ (pn/\log n)^{1/2}, pn, (1/p)^{1/2}\}$ be any function that tends to infinity as $n \to \infty$. In particular, $pn \ge \omega^2 \log n$. Fix $T_2 = T_2(n) = n / \omega$. It follows from Proposition~\ref{prop:end_of_phase2} that a.a.s.\ at the end of the second phase, there are $Y_1^{T_2} \ge (1/2+\delta/2)T_2$ nodes that announced opinion~1, and so $Y_0^{T_2} \le (1/2-\delta/2)T_2$ nodes announced opinion~0; moreover, $ Y_1^{T_2}+Y_0^{T_2} = T_2 (1+\bigo(1/\omega))$.

Let $V_i = \{ v : C^t(v) = i \}$ be the set of nodes with opinion~$i \in \{0,1\}$ at time $T_2$. Note that, by Property~\ref{prop:exposingGnp}, we may assume that only edges within $V_0 \cup V_1$ are exposed at that stage of the process. We will first show that a.a.s.\ all nodes $v \notin V_0 \cup V_1$ have substantially more neighbours in $V_1$ than in $V_0$. Indeed, this is a simple consequence of the Chernoff bounds~(\ref{chern1}) and~(\ref{chern}): for any $v \notin V_0 \cup V_1$:
\begin{align*}
\Prob \Big( |N(v) \cap V_1| & \le |N(v) \cap V_0| + \delta T_2 p / 2 \Big) \\
&\leq \Pr \Big( |N(v)\cap V_1| \le (1/2 - \delta/4)T_2p  + \delta T_2p/2 \text{ or } |N(v)\cap V_0| \ge (1/2 - \delta/4)T_2p \Big) \\
& \le \Prob \Big( |N(v) \cap V_1| \le (1/2+\delta/4)T_2p \Big) + \Prob \Big( |N(v) \cap V_0| \ge (1/2-\delta/4)T_2p \Big) \\
&= \Prob \Big( \textrm{Bin}(|V_1|,p) \le (1/2+\delta/4)T_2p \Big) + \Prob \Big( \textrm{Bin}(|V_0|,p) \ge (1/2-\delta/4)T_2p \Big) \\
&\le 2 \exp \Big( - \Theta(T_2p) \Big) \\
&= 2 \exp \left( - \Omega \left( \frac {n}{\omega} \cdot \frac {\omega^2 \log n}{n}  \right) \right) \\
&= \bigo( 1/n^2 ),
\end{align*}
{where the first inequality follows simply by observing that $|N(v)\cap V_1| > c  + \delta T_2p/2$ and $ |N(v)\cap V_0| < c$ implies $|N(v) \cap V_1|  > |N(v) \cap V_0| + \delta T_2 p / 2 $}. The final inequality follows since $\E [ \textrm{Bin}(|V_1|,p) ] \ge (1/2+\delta/2)T_2p$ and $\E [ \textrm{Bin}(|V_0|,p) ] \le (1/2-\delta/2)T_2p$. The desired property holds by the union bound over all nodes $v \notin V_0 \cup V_1$.

Fix $T_3 = T_3(n) = n / \sqrt{\omega}$. The third phase will last till time $T_3$. Let $V_1' \subseteq V_1$ be the set of nodes from $V_1$ that were selected during the third phase. Note that each node from $V_1$ is selected during the third phase with probability at most $(T_3 - T_2)/n \le 1/\sqrt{\omega}$. Hence, $\E [|V_1'|] \le |V_1| / \sqrt{\omega}$ and so a.a.s.\ $|V_1'| \le |V_1| / \omega^{1/3}$ by Markov's inequality. A simple but important observation is that $V_1'$ is determined exclusively by the selection process (coupon collector process); in particular, it does not depend on the random graph nor the opinion dynamics. Hence, we can use Chefnoff's bound again to show that a.a.s.\ all nodes $v \notin V_0 \cup V_1$ have very few neighbours in $V_1'$. Indeed, note that for any $v \notin V_0 \cup V_1$, the number of neighbours of $n$ in $V_1'$ can be stochastically upper bounded by $\textrm{Bin}(|V_1| / \omega^{1/3},p)$ with expectation $|V_1| p / \omega^{1/3} = \Theta(np/\omega^{4/3}) = \Omega(n^{2/3} \log n) \gg \log n$. Hence, $|N(v) \cap V_1'| = \bigo( |V_1| p / \omega^{1/3} ) = \bigo( T_2 p / \omega^{1/3} ) = o(T_2p)$ with probability $1-\bigo(1/n^2)$, and so a.a.s.\ all nodes $v \notin V_0 \cup V_1$ satisfy this property. 

Combining the two properties together, we get that a.a.s.\ for all nodes $v \notin V_0 \cup V_1$ we have 
\begin{equation}\label{eq:property_phase3}
|N(v) \cap (V_1 \setminus V_1')| > |N(v) \cap (V_0 \cup V_1')|. 
\end{equation}
Let $W_1$ be the set of nodes outside of $V_0 \cup V_1$ that were selected during the third phase (possibly multiple times). If property~(\ref{eq:property_phase3}) is satisfied, then (deterministically) all  nodes in $W_1$ announce~1 in this phase. Indeed, even if all nodes from $V_1'$ changed their opinion to~0 in the meantime, nodes in $V_1$ still have majority of their neighbours with opinion~1. 

Let us summarize the situation at the beginning of the fourth (and the last) phase. Recall that $W_1$ consists of nodes that were selected for the first time during the third phase. Let $W_0 = V_0 \cup V_1$ be the set of nodes that were selected before the third phase (that is, during the first or the second phase). A.a.s.\ nodes in $W_1$ have opinion~1 and $|W_1| = (T_3-T_2) + \bigo(T_3^2/n) \sim T_3$. We may assume that nodes in $W_0$ have opinion~0 and a.a.s.\ $|W_0| = T_2 (1+\bigo(1/\omega)) \sim T_2 = o(T_3)$. Again, it is important to notice that $W_1$ and $W_0$ are determined exclusively by the selection process. ($V_1$ and $V_0$ do not posses this property and that was the main reason we needed to consider the third phase.) We may then use Chernoff's bound again, on the number of neighbours in $W_1$ and $W_0$ of any given node, to show that a.a.s.\ all nodes (not only outside of $W_1 \cup W_0$!) have more neighbours in $W_1$ than in $W_0$. It means that every node that is selected during this last phase announces opinion~1. 

Since a.a.s.\ every node is selected at least once during the next $n( \log n + \omega' / 2)$ rounds, the process is over after at most that many rounds with everyone converging to opinion~1. Hence, a.a.s.\ the entire process takes at most $T_3 + n( \log n + \omega' / 2) \le n( \log n + \omega')$ rounds. In fact, the expected number of nodes that were selected before the last phase but were not selected in the first $T'_4 = n ( \log n - \log \omega / 4)$ rounds of the last phase is equal to
$$
T_3 \left( 1 - \frac 1n \right)^{T'_4} \le \frac {n}{\sqrt{\omega}} \exp( - \log n + \frac{1}{4}\log \omega) = \omega^{-1/4} = o(1),
$$
and so a.a.s.\ all nodes selected before the last phase are selected again during the first $T_4'$ rounds of the last phase. On the other hand, a.a.s.\ there are still some nodes not selected at all after $T_3 + T'_4$ rounds. {Indeed, this follows immediately from the well studied coupon collector concentration bound for $\hat{T}$: $\Prob( \hat{T} < n\log n -cn) < e^{-c}$.} The conclusion is that a.a.s.\ all nodes are selected at least once between round $T_3$ and $\hat{T}$, and the proof is finished.
\end{proof}

\subsection{Very Sparse Random Graphs}\label{sec:very_sparse_graphs}

In this subsection, we investigate random graphs that are close to the threshold for connectivity but are still connected, that is, we assume that $pn \le \omega \log n$ and $pn \ge \log n + \omega$ for some $\omega = \omega(n) \to \infty$ as $n \to \infty$. 

First, we will show that at time $T_3 = T_3(n) = 2 n \log n$, every node announced its opinion at least once, and at most $n \omega / \log n = o(n)$ nodes have opinion~0.

\begin{proposition}\label{prop:end_of_phase3}
Let $\omega=\omega(n) = o(\log n)$ be any function that tends to infinity (sufficiently slowly) as $n \to \infty$. Suppose that $p = p(n) \le \omega \log n / n$ and $p \ge (\log n + \omega)/n$. Set $T_3 = T_3(n) = 2 n \log n$ and $s = s(n) = n \omega / \log n$. Then, a.a.s.\ all nodes announced their opinion at time $T_3$, and at most $s$ of them have opinion~0.
\end{proposition}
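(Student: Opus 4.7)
The first assertion is a standard coupon collector: the expected number of nodes not yet selected by time $T_3 = 2n\log n$ is at most $n(1-1/n)^{T_3} \le n\, e^{-2\log n} = 1/n = o(1)$, so by Markov a.a.s.\ every node has been selected at least once.

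For the second assertion, my plan is to invoke Proposition~\ref{prop:end_of_phase2} at a suitable intermediate time $T_2$ (chosen within the feasible window arising from its hypotheses) to establish a constant-factor opinion-$1$ majority of the announced set at time $T_2$, and then to split the vertex set by degree in $\Gnp$: call $v$ \emph{low-degree} if $d(v) \le np/2$ and \emph{high-degree} otherwise. A Chernoff bound on the $\Gnp$-degree distribution gives $\E[L] \le n\, e^{-np/8} \le n^{7/8} e^{-\omega/8}$, where $L$ denotes the number of low-degree nodes; this is $o(s)$ and concentrates a.a.s., so low-degree nodes can simply be charged against the $s$ budget. For high-degree nodes I would use the principle of deferred decisions: when such a $v$ is selected at time $t \in (T_2, T_3]$, expose its edges at that moment, and observe that the numbers of $v$'s announced neighbors in the current opinion-$1$ and opinion-$0$ sets are independent Binomials whose means differ by a constant multiplicative factor (from the maintained majority). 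A Chernoff bound then controls the probability that $v$ announces (or retains, upon re-selection) opinion $0$ at its last selection in $[0, T_3]$, and a union bound across high-degree nodes --- together with a coupon-collector argument showing that a.a.s.\ all but $o(s)$ of them have a sufficiently late last-selection time --- yields that at most $o(s)$ high-degree nodes end at opinion~$0$. Combining, $Y_0^{T_3} \le L + o(s) = o(s) \le s$.

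The main obstacle is maintaining, uniformly throughout $(T_2, T_3]$, the invariant that the opinion-$1$ set is a constant-factor majority of the announced set --- it is this invariant that justifies the per-node Chernoff bound at every time step. I would handle this via a stopping-time argument in the spirit of the proof of Proposition~\ref{prop:end_of_phase2}: let $S$ be the first time in $(T_2, T_3]$ at which the opinion ratio drops below some fixed threshold $\rho > 1$, couple the actual dynamics with a dominating Bernoulli process after $S$, and apply Chernoff-martingale bounds to show a.a.s.\ $S > T_3$. A secondary subtlety is that the Chernoff bound requires each considered $v$'s announced-set degree to be large relative to the majority gap $\rho - 1$; this is ensured by restricting attention to high-degree $v$ and to late selection times, where by coupon collector the announced set already has size $(1-o(1))n$.
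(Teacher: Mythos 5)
Your coupon-collector step and the reduction to Proposition~\ref{prop:end_of_phase2} at time $T_2$ match the paper, and the low-degree nodes are indeed a negligible $o(s)$ side issue. The gap is in the core of your high-degree argument. You claim that when a node $v$ is considered at a time in $(T_2,T_3]$, its numbers of neighbours in the current opinion-$1$ and opinion-$0$ sets are independent Binomials, so a per-node Chernoff bound applies at $v$'s \emph{last} selection. But in this window almost every node is selected many times (about $2\log n$ times on average), so for the typical node the relevant announcement is a re-selection: its edges were already exposed at its first selection, and the opinions of its neighbours at later times are correlated with those exposed edges (your own first announcement feeds back into your neighbours' subsequent updates). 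Given the history, a re-selected node's announcement is a deterministic function of already-revealed information, so there is no fresh Binomial randomness to which Chernoff can be applied, and the global majority invariant does not control the \emph{local} composition around a node that previously announced $0$. The same problem defeats your proposed stopping-time maintenance of the invariant: the phase-2 device works because there re-selections are a negligible $\bigo(t/\omega)$ fraction and can be pessimistically charged as $0$-announcements, and because every first-time selection supplies fresh edge randomness supporting the dominating Bernoulli coupling; in $(T_2,T_3]$ re-selections dominate, the pessimistic charge would immediately destroy any constant-factor majority, and there is no per-step Bernoulli lower bound to couple against. So as written both pillars of your plan (the per-node independence claim and the domination argument for the invariant) fail precisely in the main case.

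For comparison, the paper avoids conditioning on the evolving dynamics altogether: assuming more than $s$ nodes hold opinion~$0$ at $T_3$, it takes the first $s$ distinct nodes to announce $0$ after $T_2$, observes the deterministic neighbourhood inequality $|N(v_i)\cap(V_1\setminus S)|\le|N(v_i)\cap(V_0\cup S)|$ that each must satisfy at its announcement time, relaxes it to three static properties (few neighbours in $V_1\setminus S$, many in $V_0\setminus S$, or many inside $S$), and then rules out \emph{every} candidate triple $(V_0,V_1,S)$ with a partition $S=S_a\cup S_b\cup S_c$ by a union bound, using independence across nodes for the first two properties and an aggregate edge count inside $S$ for the third; the factor $\exp(-\Theta(sT_2p))=\exp(-\Omega(n))$ beats the $\binom{n}{T_2}^2\binom{n}{s}2^{2s}$ entropy. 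Some such decoupling of the graph randomness from the adaptively chosen bad set is exactly what your sketch is missing; without it the argument does not go through in the regime $np=\Theta(\log n)$.
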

\begin{proof}
Fix $T_2 = T_2(n) = n / \omega$. It follows from Proposition~\ref{prop:end_of_phase2} that a.a.s.\ $Y_1^{T_2} \ge (1/2+\delta/2) T_2$, $Y_0^{T_2} \le (1/2-\delta/2) T_2$, and trivially $Y_1^{T_2} + Y_0^{T_2} \le T_2$.  Let $V_i$ ($i \in \{0, 1\}$) be the set of nodes with opinion $i$ at time $T_2$. Since we aim for a statement that holds a.a.s., we may assume that the above inequalities are satisfied at time $T_2$ and continue the process from there. In fact, as explained in Subsection~\ref{sec:coupling}, we may assume that $|V_1| = Y_1^{T_2} = (1/2+\delta/2) T_2$ and $|V_0| = Y_0^{T_2} = (1/2-\delta/2) T_2$.

Note that during the next $T_3 - T_2 \sim 2 n \log n$ rounds, a.a.s.\ all nodes announce their opinion at least once. Indeed, the coupon collector problem is well understood and it is known that a.a.s.\ it happens after $(1+o(1)) n \log n$ rounds. We consider the nodes that announce opinion~0 at some point during this phase. In particular, let $v_1$ be the first node that announced opinion~0 during this phase, let $v_2 \neq v_1$ be the second such node, etc. For a contradiction, suppose that at time $T_3$ there are more than $s$ nodes with opinion~0. It means that the sequence we just constructed consists of more than $s$ nodes; let $S = \{v_1, \ldots, v_s\}$ be {the set of} the first $s$ nodes in this sequence. Note that all neighbours of $v_i$ in $V_1 \setminus \{v_1, \ldots, v_{i-1}\} \supseteq V_1 \setminus S$ had opinion 1 when $v_i$ announced opinion~0. On the other hand, no neighbour of $v_i$ outside of $V_0 \cup \{v_1, \ldots, v_{i-1}\} \subseteq V_0 \cup S$ had opinion~0 at that point. It follows that for any $v_i \in S$,
\begin{equation}
|N(v_i) \cap (V_1 \setminus S)| {\le} |N(v_i) \cap (V_0 \cup S)|. \label{eq:round3_prop}
\end{equation}
In fact, we will relax this property and conclude that for any $v_i \in S$, at least one of the following three properties holds:
\begin{eqnarray}
\text{ Property~(a):} && |N(v_i) \cap (V_1 \setminus S)| \le (1/2 + \delta/4) T_2 p \label{eq:round3_prop_a} \\
\text{ Property~(b):} && |N(v_i) \cap (V_0 \setminus S)| \ge (1/2 - \delta/4) T_2 p \label{eq:round3_prop_b} \\
\text{ Property~(c):} && |N(v_i) \cap S| \ge (\delta/2) T_2 p. \label{eq:round3_prop_c}
\end{eqnarray}
(Indeed, if none of properties~(\ref{eq:round3_prop_a})--(\ref{eq:round3_prop_c}) holds, then~(\ref{eq:round3_prop}) does not hold.) We partition the set $S$ into $S = S_a \cup S_b \cup S_c$: nodes in $S_x$ satisfy Property~($x$). We will show that a.a.s.\ in $\G(n,p)$ there are no sets  $V_0, V_1$, $S$, and partition $S = S_a \cup S_b \cup S_c$ such that Properties~(a)--(c) hold. This will finish the proof of the theorem. 

Let us fix $V_1 \subseteq V$ with $|V_1| = (1/2+\delta/2) T_2$, $V_0 \subseteq V \setminus V_1$ with $|V_0| = (1/2-\delta/2) T_2$, $S \subseteq V$ with $|S| = s = n \omega / \log n$, and partition $S = S_a \cup S_b \cup S_c$. (Note that these are arbitrary sets and we completely ignore the opinion dynamics process here). For any node $v_i \in S_a$, $|N(v_i) \cap (V_1 \setminus S)|$ is a binomial random variable with expectation $|V_1 \setminus S| p \sim (1/2+\delta/2) T_2 p$. (Note that $s = o(T_2)$.)  It follows from Chernoff's bound~(\ref{chern}) that $v_i \in S_a$ satisfies Property~(a) with probability at most $\exp(-\Theta(T_2p))$. Similarly, Chernoff's bound~(\ref{chern1}) implies that $v_i \in S_b$ satisfies Property~(b) with probability at most $\exp(-\Theta(T_2p))$. More importantly, the events associated with different nodes $v_i \in S_a \cup S_b$ are independent. Unfortunately, this is not the case for events associated with nodes $v_i \in S_c$. To deal with them, we need to consider all of them together. There are $\binom{|S_c|}{2} + |S_c|(s-|S_c|) \le |S_c|s$ pairs of nodes from $S$ such that at least one of them is in $S_c$. In order for nodes in $S_c$ to satisfy Property~(c), at least $(|S_c|/2) (\delta/2) T_2 p$ of such pairs must generate an edge in $\G(n,p)$. Since the expected number of edges is at most $|S_c|sp = o(|S_c|T_2p)$, by Chernoff's bound~(\ref{chern1}) we get that it happens with probability at most $\exp(-\Theta(|S_c|T_2p))$.

Note that by the union bound the probability that there exist sets  $V_0, V_1$, $S$, and partition $S = S_a \cup S_b \cup S_c$ such that Properties~(a)--(c) hold can be upper bounded by
\begin{align*}
\binom{n}{|V_1|} \binom{n-|V_1|}{|V_0|} & \binom{n}{s} (2^s)^2 \exp \left( -\Theta \Big( (|S_a|+|S_b|+|S_c|) T_2p \Big) \right) \\
&\le \binom{n}{T_2}^2 \binom{n}{s} 2^{2s} \exp \left( -\Theta ( s T_2 p ) \right) \\
&\le \left( \frac{en}{n/\omega} \right)^{2n/\omega} \left( \frac {en}{s} \right)^s 2^{2s} \exp \left( -\Theta ( s T_2 p ) \right) \\
&\le \exp \left( \bigo \left( \frac {2n \log \omega}{\omega} + s \log \log n + s \right) - \Omega \left( \frac {n \omega}{\log n} \cdot \frac {n}{\omega} \cdot \frac {\log n}{n} \right) \right) \\
&\le \exp \left( \bigo \left( \frac {2n \log \omega}{\omega} \right) - \Omega \left( n \right) \right) = o(1),
\end{align*}
which finishes the proof of the proposition.
\end{proof}

We will call a node $v$ to be of \emph{small degree}, if its degree is at most $k = 5 \log n / (\log \log n)^{1/2}$. Nodes of degree larger than $k$ will be called of \emph{large degree}. Before we continue investigating the process, we need to show a well-known fact that small degree nodes are not too close to each other.

\begin{lemma}\label{lem:small_degree}
Let $\omega=\omega(n) = o(\log \log n)$ be any function that tends to infinity (sufficiently slowly) as $n \to \infty$. Suppose that $p = p(n) \le \omega \log n / n$ and $p \ge (\log n + \omega)/n$. Then, the following property holds a.a.s.\ in $\G(n,p)$: any two small degree nodes are at distance at least 3 from each other.
\end{lemma}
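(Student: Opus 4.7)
The plan is to use a first moment argument. Let $N_1$ denote the number of unordered pairs $\{u,v\}$ of small degree nodes in $\Gnp$ at graph distance exactly~$1$, and let $N_2$ denote the analogous count for distance exactly~$2$. I will show $\E[N_1 + N_2] = o(1)$, after which Markov's inequality completes the proof.

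The core quantitative ingredient is a sharp upper bound on the probability that a specific vertex has small degree, even after conditioning on one or two fixed incident edges. Set $\hat q := \Prob(\textrm{Bin}(n-3, p) \le k-1)$. Because $np \ge \log n + \omega \gg k$, the ratio of consecutive terms in this binomial tail exceeds roughly $np/k \ge \sqrt{\log\log n}/5$ near $j = k-1$, so the whole sum is dominated (up to a constant factor) by its largest term. Combining the standard estimate $\binom{n}{k-1}p^{k-1} \le (enp/(k-1))^{k-1}$ with $(1-p)^{n-k} \le e^{-np + o(1)}$ yields
\[
\log \hat q \;\le\; (k-1)\log\!\left(\frac{enp}{k-1}\right) - np + \bigo(1).
\]
Plugging in $np \le \omega \log n$ and $k-1 \sim 5 \log n / \sqrt{\log\log n}$, the main term becomes
\[
(k-1)\log\!\left(\frac{enp}{k-1}\right) \;=\; \bigo\!\left(\frac{\log n \cdot (\log\omega + \log\log\log n)}{\sqrt{\log\log n}}\right) \;=\; o(\log n),
\]
where $\omega = o(\log\log n)$ absorbs the $\log\omega$ contribution. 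Combined with $-np \le -\log n$, this gives $\hat q \le n^{-1+o(1)}$.

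Given this bound, the first moment estimates are essentially bookkeeping. For $N_1$, conditioning on the edge $uv$ being present leaves the other edges from $u$ and from $v$ as independent families of Bernoullis, each yielding a $\textrm{Bin}(n-2, p)$, so the joint probability that both endpoints have small degree is $\hat q^2$ (up to the trivial adjustment between $n-2$ and $n-3$). Thus $\E[N_1] \le \binom{n}{2} p \hat q^2$. For $N_2$, I would sum over possible common neighbours $w$; conditioning on both $uw, vw \in E$ and decoupling via the edges from $u$ and $v$ to $V \setminus \{u,v,w\}$, I obtain $\E[N_2] \le \binom{n}{2}(n-2)p^2 \hat q^2$. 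Combining and using $p \le \omega \log n / n$,
\[
\E[N_1 + N_2] \;\le\; \bigo(n^3 p^2 \hat q^2) \;\le\; \bigo(\omega^2 \log^2 n) \cdot n^{-1+o(1)} \;=\; n^{-1+o(1)} \;=\; o(1).
\]

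The main obstacle will be the tail estimate $\hat q \le n^{-1+o(1)}$: the $n^{-1}$ factor comes from $e^{-np}$ at the connectivity threshold and is essentially unavoidable, so every remaining correction must fit inside $n^{o(1)}$. The specific cutoff $k = 5 \log n / \sqrt{\log\log n}$ has been calibrated precisely so that $k \log(np/k) = o(\log n)$, and the constant $5$ buys the slack needed to absorb the $\log \omega$ term (which is permitted since $\omega = o(\log\log n)$), the geometric-series factor from summing the tail, and the $\omega^2 \log^2 n$ multiplier appearing in the final expected count.
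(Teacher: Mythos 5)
Your proposal is correct and follows essentially the same route as the paper: a first-moment bound on pairs of small-degree vertices at distance $1$ or $2$, driven by the tail estimate $\Prob(\deg(v)\le k)\le n^{-1+o(1)}$, which you derive (as the paper does) by observing that the binomial tail is dominated by its top term and then using $\binom{n}{k}p^k\le (enp/k)^k$ together with $(1-p)^{n-k}\le e^{-np+o(1)}$. Your treatment is in fact slightly more careful than the paper's, since you explicitly handle the conditioning on the one or two forced incident edges before squaring the small-degree probability.
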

\begin{proof}
Since $np > \log n$ and $k = o(\log n)$, $\binom{n}{i} p^i$ is an increasing sequence for $0 \le i \le k$ and so
\[
\Prob( \deg(v) \le k ) \le \sum_{i=0 }^k \binom{n}{i}p^i (1-p)^{n-i} \le (k+1) \binom{n}{k} p^k (1-p)^{n-k}.
\]
Using the fact that $\binom{n}{k} \le (en/k)^k$ for any integers $1 \le k \le n$ and the fact that $1-x \le \exp(-x)$ for any real number $x$, we obtain the following upper bound on the probability that a node $v$ has small degree:
\begin{eqnarray*}
\Prob( \deg(v) \le k ) 
&\le& (k+1) \binom{n}{k} p^k (1-p)^{n-k} \\
&\le& (k+1) \left( \frac {en}{k} \cdot \frac {\omega \log n}{n} \right)^k \exp \Big( -p(n-k) \Big) \\
&\le& (k+1) \left( \frac {e \omega \log n}{k} \right)^k \exp \Big( -pn+pk \Big).
\end{eqnarray*}
Recall that $k = 5 \log n / (\log \log n)^{1/2}$ and $\omega=o(\log \log n)$, so $e \omega \log n / k \le \omega (\log \log n)^{1/2} \le (\log \log n)^{3/2}$. Using this and the fact that $pn \ge \log n + \omega$ and $pk \le k \omega \log n / n = o( \log^3 n / n) = o(1)$, we get that
\begin{eqnarray*}
\Prob( \deg(v) \le k ) 
&\le& (k+1) \left( (\log \log n)^{3/2} \right)^k \exp \Big( -\log n -\omega + o(1) \Big) \\
&\le& \exp \left( \log \log n + \frac {5 \log n}{(\log \log n)^{1/2}} \cdot \frac 32 \ \log \log \log n - \log n \right) \\
&\le& \exp \Big( o(1) \log n - \log n \Big) \\
&=& n^{-1+o(1)}.
\end{eqnarray*}
Hence, we expect $n^{o(1)}$ small degree nodes and so a.a.s.\ we have only $n^{o(1)}$ of them. More importantly, using similar computations one can show that the expected number of small degree nodes that are adjacent to each other is equal to
$$
\binom{n}{2} \cdot p \cdot \Big( n^{-1+o(1)} \Big)^2 = n^{-1+o(1)} = o(1).
$$
Similarly, the expected number of pairs of small degree nodes that are at distance two from each other is equal to 
$$
\binom{n}{2} \cdot n \cdot p^2 \cdot \Big( n^{-1+o(1)} \Big)^2 = n^{-1+o(1)} = o(1).
$$
Hence, a.a.s.\ any two nodes of small degree are at distance at least three from each other, and the proof of the lemma is finished. 
\end{proof}

Our next observation is that the number of large degree nodes that have opinion~0 is decreasing. 

\begin{proposition}\label{prop:end_of_phases}
Let $\omega=\omega(n) = o(\log \log n)$ be any function that tends to infinity (sufficiently slowly) as $n \to \infty$. Suppose that $p = p(n) \le \omega \log n / n$ and $p \ge (\log n + \omega)/n$. Then, the following property holds a.a.s.\ for all phases. 

Suppose that at the beginning of a phase, $\hat{s} = (n \omega / \log n) \cdot (\log \log n)^{-(i-1)/4}$ large degree nodes have opinion~0 for some $i \in \N$. Then, after $2 n \log n$ rounds all nodes announced their opinion at least once more, and at most $u = \hat{s} / (\log \log n)^{1/4} = (n \omega / \log n) \cdot (\log \log n)^{-i/4}$ large degree nodes have opinion~0.
\end{proposition}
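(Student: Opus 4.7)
The plan is to combine the standard coupon-collector estimate with a structural counting argument in the spirit of Proposition~\ref{prop:end_of_phase3}. First, since $2n\log n$ rounds exceed the coupon-collector threshold, a.a.s.\ every node is selected at least once during the phase. Fix the set $S$ of $\hat{s}$ large-degree nodes that have opinion~0 at the beginning of the phase. Let $S'$ be the (random) set of large-degree nodes that announce~0 at some time during the phase, and for each $v \in S'$ let $\tau(v)$ denote the first such time. I will rely on the following observation: at time $\tau(v)-1$, every opinion-0 neighbour $u$ of $v$ either lies in $S$ and has not yet been selected, or is a large-degree node in $S'$ with $\tau(u) < \tau(v)$, or has small degree---and by Lemma~\ref{lem:small_degree} a large-degree node can have at most one small-degree neighbour.

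Next, suppose for contradiction that more than $u$ large-degree nodes have opinion~0 at the end of the phase. Since every such node belongs to $S'$, I can pick the first $u+1$ elements $v_1, \ldots, v_{u+1}$ of $S'$ in $\tau$-order. For each $i$, the fact that $v_i$ is large-degree and announced~0 at $\tau(v_i)$ forces at least half of its neighbours to have had opinion~0 at $\tau(v_i)-1$, and the observation above then yields
\[
|N(v_i) \cap (S \cup \{v_1, \ldots, v_{i-1}\})| \;\geq\; k/2 - 1 \;\geq\; k/3 .
\]
Summing these lower bounds and noting that any edge of $\G(n,p)$ with both endpoints in $W := S \cup \{v_1, \ldots, v_{u+1}\}$ is counted at most twice, I deduce $e(W) \geq (u+1)k/6$, where $e(W)$ denotes the number of edges of $\G(n,p)$ inside $W$.

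Since $|W| \leq 2\hat{s}$, the mean $\E[e(W)] \leq 2\hat{s}^2 p = O(\hat{s}\,\omega^2)$ is a polylogarithmic factor smaller than the target $(u+1)k/6 = \Omega(\hat{s}\log n / (\log\log n)^{3/4})$, using $\omega = o(\log\log n)$ and $k = \Theta(\log n / (\log\log n)^{1/2})$. The Chernoff upper tail $\Prob[\textrm{Bin}(N,q) \geq t] \leq (eNq/t)^t$ then gives a single-configuration failure probability whose logarithm is $-\Theta((u+1)\,k\log\log n)$. A union bound over choices of $S$ (at most $\binom{n}{\hat{s}} \leq \exp(\hat{s}\log\log n)$) and over ordered tuples $(v_1,\ldots,v_{u+1})$ (at most $n^{u+1}$) adds only $\hat{s}\log\log n + (u+1)\log n$ to the exponent, and since $k\log\log n \gg \log n$ the Chernoff term wins by a factor of at least $(\log\log n)^{1/2}$. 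The resulting per-phase failure probability is $n^{-\omega(1)}$, which easily survives union-bounding over all iterations of the proposition.

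The main technical obstacle will be the ``counted at most twice'' bookkeeping in the second paragraph when one or more of the $v_i$ happen to lie in $S$ itself (an honest possibility, since an $S$-node that is re-selected and again announces~0 also belongs to $S'$), together with checking that the Chernoff exponent dominates the enumeration cost uniformly down to the terminal regime $\hat{s} = O(1)$, where $u < 1$ and the conclusion is simply that no large-degree opinion-0 node remains.
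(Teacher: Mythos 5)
Your argument is correct and shares the paper's structural skeleton: the coupon-collector estimate for the ``everyone re-announces'' claim, then a contradiction built from the sequence of large-degree nodes announcing~0 in temporal order, each of which must have at least $k/2-1$ neighbours among $S$ and its predecessors because, by Lemma~\ref{lem:small_degree}, a large-degree node has at most one small-degree neighbour. Where you genuinely diverge is in closing the count. The paper splits each node's $k/2-1$ guaranteed neighbours into those inside $U$ versus those inside $S\setminus U$, yielding two cases: Property~(a) is handled by counting edges induced by $U$ alone (so the union bound runs only over the $\binom{n}{u}$ choices of $U$), while Property~(b) exploits the independence of the edge sets $N(v_i)\cap(S\setminus U)$ across distinct $v_i$. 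You instead bound the total number of edges inside the single set $W=S\cup\{v_1,\dots,v_{u+1}\}$ with one binomial tail and union over $S$ and the ordered tuple simultaneously. This unified route works: the target $e(W)\ge (u+1)k/6=\Omega\bigl(\hat{s}\log n\,(\log\log n)^{-3/4}\bigr)$ exceeds the mean $\bigo(\hat{s}\omega^2)$ by far more than a constant factor, so the tail exponent is $-(1+o(1))\,t\log\log n=-\Omega\bigl(\hat{s}\log n\,(\log\log n)^{1/4}\bigr)$, which dominates the enumeration cost; and your ``counted at most twice'' bookkeeping survives even when some $v_i\in S$, since each endpoint of an edge can serve as $v_i$ for at most one index $i$ (the $v_i$ are distinct). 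The single-tail version is arguably cleaner than the paper's two-case analysis, at the cost of a slightly smaller (but still diverging) margin in the exponent.

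Two small repairs. First, $\binom{n}{\hat{s}}\le\exp(\hat{s}\log\log n)$ fails in later phases: as $\hat{s}$ shrinks, $\log(en/\hat{s})$ grows to $\Theta(\log n)$, so the honest bound is $\binom{n}{\hat{s}}\le\exp\bigl((1+o(1))\hat{s}\log n\bigr)$ (this is also what the paper uses). This only degrades your winning factor from $(\log\log n)^{1/2}$ to $(\log\log n)^{1/4}$, which still tends to infinity, so nothing breaks. Second, the step ``$v_i$ announced~0, hence at least half its neighbours had opinion~0'' implicitly requires that no neighbour is still in state $\perp$; this is available because the phases start after time $T_3$ of Proposition~\ref{prop:end_of_phase3}, but it should be stated.
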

\begin{proof}
First, note that the expected number of nodes that were not selected in $2 n \log n$ rounds is
$$
n \left( 1 - \frac 1n \right)^{2n \log n} \le n \exp (-2 \log n) = 1/n,
$$
so with probability $1 - \bigo( 1/\log n)$ all of them are selected at least once in any phase consisting of $2 n \log n$ rounds. Since we will iteratively apply the argument for $\bigo( \log n / \log \log \log n) = o(\log n)$ phases, all of them have the desired property a.a.s.

Since we aim for a statement that holds a.a.s., we may assume that the graph satisfies property stated in Lemma~\ref{lem:small_degree}. For a contradiction, suppose that some phase fails, that is, at the beginning of this phase $\hat{s}$ large degree nodes have opinion~0, and at the end of this phase more than $u = \hat{s} / (\log \log n)^{1/4}$ large degree nodes have opinion~0. As in the proof of Proposition~\ref{prop:end_of_phase3}, we consider a sequence of distinct nodes, $v_1, v_2, \ldots$, in which large degree nodes announce opinion~0: $v_1$ announced opinion~0 first, then $v_2 \neq v_1$, etc. Let $U = \{v_1, \ldots, v_u\}$ be {the set of} the first $u$ nodes in this sequence and let $S$ be the set of large degree nodes that have opinion~0 at the beginning of this phase. Recall that each large degree node has degree at least $k = 5 (\log n)(\log \log n)^{-1/2}$ and at most one neighbour of small degree (Lemma~\ref{lem:small_degree}). Small degree nodes may (or may not) have opinion~0 but no large degree node outside of $S \cup U$ has opinion~0 at the time node $v_i$ announced opinion~0. We conclude that for all $i \in [u]$, $v_i$ has at least $k/2-1 \ge 2 (\log n)(\log \log n)^{-1/2}$ neighbours in $S \cup U$. 

We say that set $U$ satisfies Property~(a) if the following holds:
\begin{center}
Property~(a): at least $u/2$ nodes in $U$ have at least $(\log n)(\log \log n)^{-1/2}$ neighbours in $U$.
\end{center}
If $U$ does not satisfy Property~(a), then less than $u/2$ of nodes in $U$ have at least $(\log n)(\log \log n)^{-1/2}$ neighbours in $U$, which implies that set $U$ (together with $S$) satisfies the following property:
\begin{center}
Property~(b): at least $u/2$ nodes in $U$ have at least $(\log n)(\log \log n)^{-1/2}$ neighbours in $S \setminus U$.
\end{center}
We will deal with each property independently and show that it is not present in $\G(n,p)$ with the desired probability. 

If Property~(a) is satisfied for some set $U$ of size $u$, then $U$ induces at least $u (\log n)(\log \log n)^{-1/2} / 4$ edges. Hence, the probability that some set of size $u$ has this property is at most 
\begin{align*}
\binom {n}{u} & \binom { \binom{u}{2} }{u (\log n)(\log \log n)^{-1/2} / 4} p^{u (\log n)(\log \log n)^{-1/2} / 4} \\
& \le \binom {n}{u} \left( \frac {e u^2 / 2}{u (\log n)(\log \log n)^{-1/2} / 4} \cdot \frac {\omega \log n}{n} \right)^{u (\log n)(\log \log n)^{-1/2} / 4} \\
& \le \binom {n}{u} \left( \frac {2 e \omega u (\log \log n)^{1/2}} {n} \right)^{u (\log n)(\log \log n)^{-1/2} / 4} \\
& \le \binom {n}{u} \left( \frac {2 e \omega \hat{s} (\log \log n)^{1/4}} {n} \right)^{u (\log n)(\log \log n)^{-1/2} / 4} \\
& \le n^u \left( \frac {2 e \omega^2 (\log \log n)^{1/4}} {\log n} \right)^{u (\log n)(\log \log n)^{-1/2} / 4} \\
& \le \exp \left( u \log n - \frac {u (\log n)}{4 (\log \log n)^{1/2}} \cdot (1+o(1)) \log \log n \right) \\
&=  \bigo( 1/ \log n).
\end{align*}

If Property~(b) is satisfied for some set $U$ of size $u$ and some set $S$ of size $\hat{s}$, then there exists a subset $U' \subseteq U$ of size $u/2$ such that each $v_i \in U'$ has at least $(\log n)(\log \log n)^{-1/2}$ neighbours in $S \setminus U$. The probability that a given $v_i \in U'$ has this property is at most
\begin{align*}
\binom{\hat{s}}{(\log n)(\log \log n)^{-1/2}} & p^{(\log n)(\log \log n)^{-1/2}} \\
& \le \left( \frac {e\hat{s}}{(\log n)(\log \log n)^{-1/2}} \cdot \frac {\omega \log n}{n} \right)^{(\log n)(\log \log n)^{-1/2}} \\
& \le \left( \frac {e \omega^2 (\log \log n)^{1/2}}{(\log n)} \right)^{(\log n)(\log \log n)^{-1/2}} \\
& \le \exp \left( - (\log n)(\log \log n)^{-1/2} \cdot (1+o(1)) \log \log n \right) \\
& = \exp \left( - (1+o(1)) (\log n)(\log \log n)^{1/2} \right).
\end{align*}
Moreover, the events associated with different $v_i \in U'$ are independent. Hence, by the union bound, the probability that there exist a pair of sets $U, S$, and a partition $U = U' \cup (U \setminus U')$ can be upper bounded by
\begin{align*}
\binom{n}{\hat{s}} \binom{n}{u} & 2^u \exp \left( - (1+o(1)) (\log n)(\log \log n)^{1/2} \right) \\
& \le \exp \left(\hat{s} \log n + u \log n + u - (1+o(1)) (\log n)(\log \log n)^{1/2} \cdot (u/2) \right) \\
& = \exp \left( (1+o(1)) \hat{s} \log n - (1+o(1)) (\log n)(\log \log n)^{1/4} \cdot (\hat{s}/2) \right) \\
&=  \bigo( 1/ \log n).
\end{align*}
This finishes the proof of the theorem as the argument has to be (iteratively) applied only for $\bigo( \log n / \log \log \log n) = o(\log n)$ phases.
\end{proof}

Finally, we are ready to show that all nodes eventually converge to opinion~1.

\begin{proof}[Proof of Theorem~\ref{thm:very_sparse_graphs}]
The proof is an easy consequence of Propositions~\ref{prop:end_of_phase3},~\ref{prop:end_of_phases}, and Lemma~\ref{lem:small_degree}. Indeed, a.a.s.\ at time $T_3 = T_3(n) = 2n\log n$, all but at most $s = s(n) = n \omega / \log n$ nodes have opinion~1 (Proposition~\ref{prop:end_of_phase3}). Most of them are of large degree but some of them may be of small degree. By Proposition~\ref{prop:end_of_phases}, the number of large degree nodes that have opinion~0 decreases: a.a.s.\ at time $2 n \log n \cdot \bigo( \log n / \log \log n ) = \bigo( n (\log n)^2 / (\log \log n) )$ no large degree node has opinion~0. There could possibly be still some nodes of small degree that have opinion~0 but everyone converges to opinion~1 after an additional  $\bigo( n \log n)$ rounds. Indeed, every node is selected at least once during that time period a.a.s. Large degree nodes have many neighbours but at most one neighbours of small degree (Lemma~\ref{lem:small_degree}). So they will not change their opinion and stay with opinion~1. On the other hand, by the same lemma, no small degree node has a neighbour of small degree. Hence, such nodes will switch to opinion~1 once they are selected again. This finishes the proof of the theorem.
\end{proof}

\section{Dense Random Graphs}\label{sec:dense_graphs}

In this section, we prove that for dense graphs (that is, when $p \in (0,1]$ is a constant) it is not true that all nodes converge to the correct opinion a.a.s. On the contrary, there maybe an information cascade where all the nodes converge to the wrong opinion with constant probability.

\begin{proof}[Proof of Theorem~\ref{thm:dense_graphs}]
Fix any $p \in (0,1)$. We will consider the case $p=1$ (easy case) at the end of the proof.

Trivially, the first node announces its private belief, that is, it announces opinion~1 with probability $1/2+\delta$; otherwise, it announces~0. Since nodes are selected by the process (``coupon collector'') independently of the graph, we may postpone exposing edges of the random graph till the first time a node is selected. Each time this happens, we expose edges from $v^t$ to all nodes that already announced their opinion. If every single time at least one edge is present, then all nodes are going to announce the opinion of the very first node. It follows that 
$$
p_1 \ge (1/2+\delta) \prod_{i=1}^n \Big( 1 - (1-p)^i \Big).
$$
It is easy to see that for any $x \in [0,1-p]$,
$$
f(x) = 1-x \ge \exp \left( - \frac {\log(1/p)}{1-p} x \right) = g(x).
$$ 
(Note that $f(0) = g(0)$, $f(1-p)=g(1-p)$, and $g(x)$ is convex.) Hence,
\begin{eqnarray*}
p_1 &\ge& (1/2+\delta) \exp \Big( - \frac {\log(1/p)}{1-p} \sum_{i=1}^n (1-p)^i \Big) \\
&\ge& (1/2+\delta) \exp \Big( - \log(1/p) \sum_{i=0}^{\infty} (1-p)^i \Big)\\
&=& (1/2+\delta) \exp \Big( - \log(1/p) (1/p) \Big) =(1/2+\delta)p^{1/p}.
\end{eqnarray*}
The same argument works for $p_0$ with the only difference that the probability of the first node announcing~1 ($1/2+\delta$) needs to be replaced with the probability of announcing~0 ($1/2-\delta$). 

Finally, note that if $p=1$, then the graph is (deterministically) the complete graph and (again, deterministically) all nodes are going to adopt the opinion of the very first node. Thus, we immediately get $p_1 = 1/2 + \delta$ and $p_0 = 1/2 - \delta$ {(which matches the general formula that works for $p \in (0,1]$)}. This finishes the proof of the theorem.
\end{proof}

\bibliography{ref}

\appendix
\section{Missing Proofs}\label{sec:appendix}

\begin{proof}[Proof of Lemma~\ref{lem:binomial}]
Let us first consider any odd value of $k \ge 3$. We get that 
\begin{eqnarray*}
q_k &=& \sum_{i \ge (k+1)/2} \binom{k}{i} (1/2+\delta/2)^i (1/2-\delta/2)^{k-i} \\
&\ge& \frac {1/2+\delta/2}{1/2-\delta/2} \cdot \binom{k}{(k+1)/2} (1/2-\delta/2)^{(k+1)/2} (1/2+\delta/2)^{(k-1)/2} \\
&& + \left( \frac {1/2+\delta/2}{1/2-\delta/2} \right)^3 \sum_{i \ge (k+3)/2} \binom{k}{i} (1/2-\delta/2)^i (1/2+\delta/2)^{k-i} \\
&=& \frac {1/2+\delta/2}{1/2-\delta/2} \cdot A + \left( \frac {1/2+\delta/2}{1/2-\delta/2} \right)^3 \cdot B,
\end{eqnarray*}
where
\begin{eqnarray*}
A &=& \binom{k}{(k+1)/2} (1/2-\delta/2)^{(k+1)/2} (1/2+\delta/2)^{(k-1)/2}  \\
B &=& \sum_{i \ge (k+3)/2} \binom{k}{i} (1/2-\delta/2)^i (1/2+\delta/2)^{k-i}.
\end{eqnarray*}
Note that $A+B = 1-q_k \le 1/2$. More importantly, if $q_k \ge 1/2 + 51 \delta / 100$, then the desired property holds and there is nothing to prove. Hence, we may assume that $1-q_k \ge 1/2 - 51\delta / 100 \ge 449/1000$. It follows that
\begin{eqnarray*}
A &=& \binom{k}{(k+1)/2} \Big( (1/2-\delta/2) (1/2+\delta/2) \Big)^{(k-1)/2} (1/2-\delta/2) \\
&=& \binom{k}{(k+1)/2} \Big( (1/4-\delta^2/4) \Big)^{(k-1)/2} (1/2-\delta/2) \\
&\le& \binom{k}{(k+1)/2} (1/2)^k \le \frac {3}{8} \le \frac {375}{449} (A+B) \le \frac {9}{10} (A+B).
\end{eqnarray*}
As a consequence,
\begin{eqnarray*}
q_k &\ge& \frac {1/2+\delta/2}{1/2-\delta/2} \cdot A + \left( \frac {1/2+\delta/2}{1/2-\delta/2} \right)^3 \cdot B \\
&\ge& \left( \frac {1/2+\delta/2}{1/2-\delta/2} \cdot \frac {9}{10} + \left( \frac {1/2+\delta/2}{1/2-\delta/2} \right)^3 \cdot \frac {1}{10} \right) (A+B),
\end{eqnarray*}
where the last inequality follows from the fact that the coefficient {in front} of $A$ is smaller than {the one in front of} $B$, so a linear combination of the coefficient's are minimized when $A$ is the largest it can {possibly} be (which is $9/10 (A+B)$). Moreover, since $A+B = 1-q_k$ we have,
\begin{eqnarray*}
\frac {q_k}{1-q_k} &\ge& \frac {1+\delta}{1-\delta} \cdot \frac {9}{10} + \left( \frac {1+\delta}{1-\delta} \right)^3 \cdot \frac {1}{10} \\
&=& (1+\delta)(1+\delta+\bigo(\delta^2)) \cdot \frac {9}{10} + (1+3\delta+\bigo(\delta^2))(1+3\delta+\bigo(\delta^2)) \cdot \frac {1}{10} \\
&=& (1+2\delta+\bigo(\delta^2)) \cdot \frac {9}{10} + (1+6\delta+\bigo(\delta^2)) \cdot \frac {1}{10} \\
&=& (1+2\delta+\bigo(\delta^2)) \cdot \frac {9}{10} + (1+6\delta+\bigo(\delta^2)) \cdot \frac {1}{10} \\
&=& 1 + \frac {12}{5} \delta + \bigo(\delta^2) = \frac {1 + 6\delta/5}{1-6\delta/5} + \bigo(\delta^2) = \frac {1/2 + 3\delta/5}{1/2-3\delta/5} + \bigo(\delta^2).
\end{eqnarray*}
Clearly, 
$$
\frac {q_k}{1-q_k} \ge \frac {1/2 + 51\delta/100}{1/2-51\delta/100}
$$
for sufficiently small $\delta$ but one can show it holds for $\delta \in (0,1/10]$. This implies $q_k \ge {1/2+51\delta/100}$ for odd values of $k\ge 3$.

Let us now consider any even value of $k \ge 2$. This time we get that 
\begin{eqnarray*}
q_k &=& \sum_{i \ge k/2+1} \binom{k}{i} (1/2+\delta/2)^i (1/2-\delta/2)^{k-i} + \binom{k}{k/2} (1/2+\delta/2)^{k/2} (1/2-\delta/2)^{k/2} (1/2+\delta) \\
&\ge& \left( \frac {1/2+\delta/2}{1/2-\delta/2} \right)^2 \sum_{i \ge k/2+1} \binom{k}{i} (1/2-\delta/2)^i (1/2+\delta/2)^{k-i} \\
&& + \frac {1/2+\delta}{1/2-\delta} \cdot \binom{k}{k/2} (1/2+\delta/2)^{k/2} (1/2-\delta/2)^{k/2} (1/2-\delta) \\
&=& \left( \frac {1+\delta}{1-\delta} \right)^2 \cdot A + \frac {1/2+\delta}{1/2-\delta} \cdot B, 
\end{eqnarray*}
where 
\begin{eqnarray*}
A &=& \sum_{i \ge k/2+1} \binom{k}{i} (1/2-\delta/2)^i (1/2+\delta/2)^{k-i} \\
B &=& \binom{k}{k/2} (1/2+\delta/2)^{k/2} (1/2-\delta/2)^{k/2} (1/2-\delta),
\end{eqnarray*}
and $A+B = 1-q_k$. Since
$$
\left( \frac {1+\delta}{1-\delta} \right)^2 = \frac {1+2\delta + \delta^2}{1-2\delta+\delta^2} \ge \frac {1+ 19\delta/10}{1-19\delta/10} = \frac {1/2+ 19\delta/20}{1/2-19\delta/20} \ge \frac {1/2+ 51\delta/100}{1/2-51\delta/100}
$$
and, trivially,
$$
\frac {1/2+\delta}{1/2-\delta} \ge \frac {1/2+51\delta/100}{1/2-51\delta/100},
$$
we get that
$$
q_k \ge  \frac {1/2+51\delta/100}{1/2-51\delta/100} (1-q_k),
$$
which implies $q_k \ge  {1/2+51\delta/100}$ for even values of $k$ too.
\end{proof}

\end{document}